\newcommand{\lcm}{\mathop{\rm lcm}}
\def\csect#1{Section~\ref{#1}}
\def\csects#1#2{Sections~\ref{#1} and \ref{#2}}
\newcommand{\be}{\begin{equation}}
\newcommand{\ee}{\end{equation}}
\newtheorem{theorem}{Theorem}%[]
\newtheorem{corollary}[theorem]{Corollary}
\newtheorem{lemma}[theorem]{Lemma}
\newtheorem{example}[theorem]{Example}
\newtheorem{remark}[theorem]{Remark}
\newcommand{\cthm}[1]{Theorem~\ref{#1}}
\newcommand{\clem}[1]{Lemma~\ref{#1}}
\newcommand{\ccor}[1]{Corollary~\ref{#1}}
\newcommand{\cfig}[1]{Figure~\ref{#1}}
\newcommand{\crem}[1]{Remark~\ref{#1}}
\newcommand{\cex}[1]{Example~\ref{#1}}
\newenvironment{proof}{\noindent\textbf{Proof:} }
   {\hfill {\quad$\blacksquare$}\par\medskip}
\newenvironment{proofof}[1]{\medskip\noindent
   \textit{Proof of #1:} }{\hfill {\quad$\blacksquare$}\par\medskip}
    \def\ZZ{\mathbb{Z}}
    \def\bbz{\mathbb{Z}}
    \def\bbl{\mathbb{L}}
    \def\bbt{\mathbb{T}}
  \def\C{{\cal C}}\def\P{{\cal P}}
\def\nn{\nonumber}
\def\interiorM{\rlap{\raise9pt\hbox to9.5pt{\hss$\scriptscriptstyle\circ$}}M}
\begin{document}

\title{Translation invariant extensions of finite volume measures}
\author{S. Goldstein\footnote{Department of Mathematics,
Rutgers University, New Brunswick, NJ 08903.},
 T. Kuna\footnote{Department of Mathematics and Statistics,
      University of Reading, Whiteknights, PO~Box 220, Reading RG6 6AX, UK.},
J. L. Lebowitz\footnotemark[1],
\footnote{Also Department of Physics, Rutgers.}\ \ 
and E. R. Speer\footnotemark[1]}

\maketitle

\noindent{\raggedright
{\bf Keywords:} local translation invariance, translation invariant
extensions, maximal entropy extensions, de Bruijn graphs, Bethe lattice\par}

\medskip

\begin{abstract}We investigate the following questions: Given a measure
$\mu_\Lambda$ on configurations on a subset $\Lambda$ of a lattice $\bbl$,
where a configuration is an element of $\Omega^\Lambda$ for some fixed set
$\Omega$, does there exist a measure $\mu$ on configurations on all of
$\bbl$, invariant under some specified symmetry group of $\bbl$, such that
$\mu_\Lambda$ is its marginal on configurations on $\Lambda$?  When the
answer is yes, what are the properties, e.g., the entropies, of such
measures?  Our primary focus is the case in which $\bbl=\bbz^d$ and the
symmetries are the translations. For the case in which $\Lambda$ is an
interval in $\ZZ$ we give a simple necessary and sufficient condition, {\it
local translation invariance (LTI)}, for extendibility.  For LTI measures
we construct extensions having maximal entropy, which we show are Gibbs
measures; this construction extends to the case in which $\bbl$ is the
Bethe lattice.  On $\bbz$ we also consider extensions supported on periodic
configurations, which are analyzed using de~Bruijn graphs and which include
the extensions with minimal entropy.  When $\Lambda\subset\bbz$ is not an
interval, or when $\Lambda\subset\bbz^d$ with $d>1$, the LTI condition is
necessary but not sufficient for extendibility.  For $\bbz^d$ with $d>1$,
extendibility is in some sense undecidable.  \end{abstract}

\section{Introduction\label{intro}}

We consider an extension problem for measures describing configurations
on a lattice $\bbl$, in which the value $\eta_i$ of the configuration
$\eta$ at site (or equivalently vertex) $i$ of $\bbl$ belongs to some
fixed set $\Omega$.  Let $X=\Omega^{\bbl}$ be the set of all
configurations on $\bbl$ and, for $\Lambda\subset\bbl$, let
$X_\Lambda=\Omega^{\Lambda}$ be the set of configurations on $\Lambda$.
Then we address the following extension problem: for a fixed
$\Lambda\subset\bbl$ and probability measure $\mu_\Lambda$ on
$X_\Lambda$, does there exist a probability measure $\mu$ on $X$ such
that (i)~$\mu$ is invariant under some specified group of symmetries of
$\bbl$ and (ii)~the marginal measure of $\mu$ on $X_\Lambda$ is
$\mu_\Lambda$?  When $\mu$ exists we speak of it as an  {\it invariant
extension} of $\mu_\Lambda$ or, since we consider only invariant
extensions, simply as an {\it extension}.  In this case we say that
$\mu_\Lambda$ is {\it extendible}.

For most of this paper the lattice $\bbl$ will be taken to be $\bbz^d$ for
some $d\ge1$, and the symmetries of interest will be translations or,
occasionally, translations plus reflections about a coordinate axis.  In
this case we will refer to an invariant extension $\mu$ as a {\it TI
extension} or, as above, simply as an {\it extension}.  In \csect{trees} we
discuss the case in which $\bbl$ is the Bethe lattice.

A. G. Schlijper \cite{Schlijper1}--\cite{Schlijper4} studied the problem
of TI extensions on $\bbz^d$ in the 1980's, in connection with its
application to the {\it cluster-variation method} \cite{Pel}, used to obtain
approximations to the equilibrium thermodynamic properties of lattice
systems.  The problem was also studied by Pivato \cite{Pivato} in 2001
and by Chazottes {\it et al.} \cite{CGHU} in 2012.  For completeness we
will summarize or rederive some of the results of these papers in the
course of describing our own work; we discuss this in more detail at the
end of this section.

For $\Lambda\subset\bbz^d$ we say that measure $\mu_\Lambda$ on
$X_\Lambda$ is {\it locally translation invariant} (LTI) if for any
subsets $A,A'\subset\Lambda$, with $A'$ a translate of $A$, the marginal
of $\mu_\Lambda$ on $X_{A'}$ is the translate of the marginal on $X_A$.
The LTI property is clearly necessary for the existence of a TI extension
$\mu$ of $\mu_\Lambda$ and one question we investigate is whether or not
it is also sufficient.

 From now on we will assume that, unless otherwise stated, $\Lambda$ is
finite.  Our primary focus will be the case in which $\Omega$ is finite.
Our results then do not depend significantly on the cardinality of
$\Omega$; for simplicity of presentation we will often take $\Omega$ to
be $\{0,1\}$, so that a configuration $\eta\in X$ may be thought of as
describing a particle system or point process on the lattice.  We will
occasionally point out instances in which our results extend to certain
infinite $\Omega$.

For finite $\Omega$ we have a good understanding of the situation when
$d=1$ and $\Lambda$ is an interval.  In particular, the LTI condition is
in this case sufficient for the existence of the TI extension
\cite{Schlijper3,Pivato}, as we verify in \csect{maxent} by an explicit
construction of such an extension; the resulting $\mu$ is in fact the
unique TI extension which maximizes the entropy of its marginal on
$X_{\Lambda'}$ for any interval $\Lambda'\supset\Lambda$.  Moreover, if
$\Lambda$ contains $k+1$ sites then $\mu$ is the shift-invariant measure
on the sample paths of a $k$-step Markov process, and as such is a Gibbs
measure with TI interactions which have range at most $k$ and involve at
most $k+1$ sites.  We obtain an explicit expression for the Gibbs-Shannon
entropy of $\mu_{\Lambda'}$ as an affine function of the size of
$\Lambda'$.  This extension method applies also when $\Lambda$ is an
infinite strip in $\bbz^2$ or a higher dimensional analogue of such a
strip, or is one of a certain class of subgraphs of the Bethe lattice
which are  analogous to intervals in $\bbz$.

In the one-dimensional case, with $\Lambda$ an interval, $\Omega$ finite,
and $\mu_\Lambda$ LTI, there exist also TI extensions of $\mu_\Lambda$
which are supported on a finite number of periodic configurations
\cite{CGHU}.  We discuss this and related results in
\csects{perext}{minent}.  Every TI measure of this type has finite
entropy, in a sense which will be made precise later, and we show that
every extension with minimal entropy is of this type (there is in general
no unique minimal entropy extension).  However, for general $\mu_\Lambda$
we are not able to give an explicit construction of a minimal entropy
extension or even to determine the minimal entropy, although we do give
an {\it a priori} bound for the latter.  A key element of our analysis
here is the identification of a measure on $X_\Lambda$ with an
edge-weighting of a de~Bruijn graph; the LTI condition on the measure
then corresponds to current conservation at the vertices of the graph.

When one passes to the extension problem for $\Lambda$ not an interval of
$\bbz$ the situation is more complex.  In \csect{noextend} we give simple
examples of LTI measures $\mu_\Lambda$ which are not extendible; for
example, with $\Lambda=\{i,i+1,i+3\}\subset\bbz$ and $\Lambda$ a unit
square in $\ZZ^2$ (with four lattice sites).  For the latter example it is
relevant to remark that when $\Lambda\subset\bbz^2$ is a product of
intervals, and $\mu_\Lambda$ is LTI, the maximal entropy extension
procedure mentioned above can be used to extend $\mu_\Lambda$ in one
direction, i.e., to an infinite strip, and the resulting extension is TI in
the direction of extension.  In this process, however, the LTI property in
the second direction may be lost; only if it is preserved can one repeat the
process to obtain an extension to all of $\ZZ^2$.  Similar remarks apply in
higher dimensions.
 
 As noted above, when $\Omega=\{0,1\}$ we are studying point processes on
$\bbz^d$.  In this context the current paper may be viewed as continuing
work on a long-standing problem: the existence of point processes when one
is given partial information about the desired process (see, for example,
\cite{Percus,ST2,ST4} and further references given in \cite{KLS11}).  In
our own previous work \cite{KLS0,CKLS,KLS,KLS11} we considered the case in
which the information given was all $k$-point correlation functions of the
process up to some order $k_0$; this is in essence a {\it truncated moment
problem}.  Extending the work of \cite{AS} we obtained a sufficient
condition, essentially one of low density, for the existence of the
process.  Complete necessary and sufficient conditions were found in
\cite{KLS11}; these are worked out in detail for the case $k_0=2$, but the
method is general.  See also \cite{LM}.

We finally note that, for fixed $\Lambda\subset\bbz^d$, the set of measures
$\mu_\Lambda$ on $X_\Lambda$ which can be extended to a TI measure on
$\bbz^d$ is closed in the weak topology. For if
$(\mu_{\Lambda,n})_{n=1}^\infty$ is a sequence of measures on $X_\Lambda$
converging weakly to a measure $\mu_\Lambda$, and $\hat\mu_{n}$ is
a TI extension of $\mu_{\Lambda,n}$ to $X$, then some subsequence of
$(\hat\mu_{n})_{n=1}^\infty$ will converge, by compactness, and the
limit will be a TI extension of $\mu_\Lambda$.

The outline of the rest of the paper is as follows.  In \csect{maxent} we
show that the LTI condition is sufficient, when $\Lambda$ is an interval in
$\bbz$, for the existence of a TI extension; we do so by explicit
construction of the maximal entropy, Markovian, extension. We also point
out that this extension is Gibbsian. In \csect{trees} we take up the
extension problem on the Bethe lattice; there also we construct maximal
entropy extensions which are Markovian and Gibbsian.  In \csect{perext} we
return the the case in which $\Lambda$ is an interval in $\bbz$ and discuss
a different class of extensions, those supported on periodic
configurations, and in \csect{minent} show that these include the
extensions of minimal possible entropy.  In \csect{noextend} we give
several examples to show that, in $\bbz^d$, the LTI condition is not
sufficient for extendibility when $\Lambda$ is not an interval in $\bbz$;
we give also a corresponding example for the Bethe lattice.
\csect{decide} takes up the issue of the decidability of the extension
problem.

We finally discuss briefly the overlap of this paper with other work.  As
noted above, the Markovian extension of \csects{maxent}{sect_entropy} has
been discussed before, see in particular \cite{Schlijper3,Pivato}, although
its Gibbsian nature has not been noted.  Periodic extensions are discussed
\cite{Pivato,CGHU}, and our results in \csect{perext} partially overlap
with this work.  Our discussion of the Bethe lattice in \csect{trees} is,
we believe, new, as is the discussion of minimal entropy measures in
\csect{minent}.  Examples \ref{22} and \ref{33} of \csect{noextend} are
new.  Aspects of undecidability for the extension problem in higher
dimensions have been discussed in \cite{Schlijper4,Pivato,CGHU}, but we
give a formulation that we feel has something new to offer.

 \section{Maximal entropy extensions in one dimension\label{maxent}}

We will adapt the general notation of the introduction to the
one-dimensional case considered in this and in Sections~\ref{perext} and
\ref{minent} by letting $\Lambda_k$, for $k\ge0$, be the subset
$\{0,1,\ldots,k\}$ of $\ZZ$, $X_k=\{0,1\}^{\Lambda_k}$ be the set of
particle configurations on $\Lambda_k$, and $\pi_k\mu$, for $\mu$ a
measure on $X$ or on $X_j$ with $j>k$, be the marginal of $\mu$ on $X_k$.
We suppose that we are given a LTI measure $\mu_k$ on $X_k$ and construct
explicitly a (TI) measure $\mu$ on $X=\{0,1\}^{\ZZ}$ which extends
$\mu_k$, i.e., satisfies $\pi_k\mu=\mu_k$.

We will usually invoke the condition that $\mu_k$ is LTI as the requirement
that the marginal $\mu_{k-1}=\pi_{k-1}\mu_k$ of $\mu_k$ on $X_{k-1}$ be the
translation of the marginal on configurations on the sites
$\{1,2,\ldots,k\}$:
 \begin{eqnarray}
\mu_{k-1}(\eta_0,\ldots,\eta_{k-1})
  &=&\sum_{\sigma=0,1}\mu_k(\eta_0,\ldots,\eta_{k-1},\sigma),\nonumber\\
  &=&\sum_{\sigma=0,1}\mu_k(\sigma,\eta_0,\ldots,\eta_{k-1})\label{pti}
 \end{eqnarray}
 We say that $\mu_k$ is {\it symmetric} if it is invariant under
reflections about the center of $\Lambda_k$.  It is easy to verify that if
$\mu_k$ is LTI and $j<k$ then the marginal of $\mu_k$ on $X_j$ is also LTI,
and if $\mu_k$ is also symmetric then so is this marginal.  In fact, a
symmetric measure $\mu_k$ is LTI if and only if its marginal on $X_{k-1}$
is symmetric. 

 We begin the construction of $\mu$ by extending $\mu_k$ to a LTI measure
$\mu_{k+1}$ on $X_{k+1}$, defining
$\mu_{k+1}(\eta_0,\eta_1,\ldots,\eta_{k+1})=0$ if
$\mu_{k-1}(\eta_1,\ldots,\eta_k)=0$, and
 \be\label{extend}
\mu_{k+1}(\eta_0,\eta_1,\ldots,\eta_{k+1}) 
 = \frac{\mu_k(\eta_0,\eta_1,\ldots,\eta_k)
    \mu_k(\eta_1,\eta_2,\ldots,\eta_{k+1})}
   {\mu_{k-1}(\eta_1,\ldots,\eta_{k})},
 \ee
 otherwise.  One verifies easily that $\mu_{k+1}$ is LTI and has marginal
$\mu_k$ on $X_k$.  Moreover, if $\mu_k$ is symmetric, so is $\mu_{k+1}$.
 This leads immediately to the main result of this section.

\begin{theorem} \label{jll} If $\mu_k$ is a LTI measure on $X_k$ then there
exists a TI measure $\mu$ on $X$ which extends $\mu_k$.
Moreover, if $\mu_k$ is symmetric then $\mu$ may be taken to be symmetric
under any reflection.  \end{theorem}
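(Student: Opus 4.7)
The plan is to iterate the one-step construction (\ref{extend}) to produce a family of LTI measures $(\mu_n)_{n \geq k}$ on $X_n$, each extending the previous, and then assemble the TI extension $\mu$ via Kolmogorov extension.

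First I would check by induction on $n \geq k$ that (\ref{extend}) with $k$ replaced by $n$ produces an LTI measure $\mu_{n+1}$ on $X_{n+1}$ with marginal $\mu_n$ on $X_n$. The right-hand side is well-defined because LTI of $\mu_n$ identifies its marginals on $\{0, \ldots, n-1\}$ and $\{1, \ldots, n\}$, and the excerpt has already verified that this one-step extension both preserves LTI and reproduces $\mu_n$ as the marginal on $X_n$. Iterating yields an LTI family satisfying $\pi_n \mu_{n+1} = \mu_n$ for all $n \geq k$.

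Next I would use LTI to transport these marginals to arbitrary finite subsets of $\bbz$. For each finite $F \subset \bbz$, choose $n \geq k$ and $a \in \bbz$ with $F - a \subset \{0, \ldots, n\}$, and define $\nu_F$ on $\Omega^F$ as the translate by $a$ of the marginal of $\mu_n$ on $F - a$. Independence of $a$ follows from LTI of $\mu_n$, and independence of $n$ from $\pi_n \mu_{n+1} = \mu_n$. The family $\{\nu_F\}$ is then consistent under restriction (for $F \subset G$ pick $n,a$ working simultaneously for both) and manifestly translation-equivariant, so Kolmogorov's extension theorem supplies a TI measure $\mu$ on $X$ whose marginal on $\Omega^F$ is $\nu_F$; taking $F = \Lambda_k$ gives $\pi_k \mu = \mu_k$.

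For the symmetry claim, the excerpt shows that (\ref{extend}) preserves reflection symmetry, so inductively each $\mu_n$ is symmetric about the center of $\{0, \ldots, n\}$; combined with translation invariance of $\mu$ this yields invariance under every reflection of $\bbz$ about a point of $\tfrac{1}{2}\bbz$. The only step that requires genuine care is the Kolmogorov compatibility check for $\{\nu_F\}$, but this reduces cleanly to the two invariants maintained throughout the iteration: LTI of each $\mu_n$ and $\pi_n \mu_{n+1} = \mu_n$.
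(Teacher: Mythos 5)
Your proposal is correct and follows essentially the same route as the paper's proof: iterate the one-step extension \eqref{extend} to build LTI measures on ever larger intervals, check consistency, and invoke the Kolmogorov extension theorem, with symmetry preserved at each step. You simply spell out in more detail the compatibility check and the passage from reflection symmetry of the finite marginals to reflection invariance of $\mu$, which the paper dispatches as immediate.
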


\begin{proof} Suppose $\Lambda=\{-l,-l+1,\ldots, j\}$ with $l\ge0$ and
$j\ge k$.  By the repeated application of \eqref{extend} (and a
translation) one may find, for any $l$ and $j$, a LTI measure
$\mu_\Lambda$ on $\{0,1\}^\Lambda$ whose marginal on $X_k$ is $\mu_k$.
The result then follows from the Kolmogorov existence theorem.  The
preservation of symmetry is immediate.  \end{proof}

There is another approach to this extension procedure.  Equation
\eqref{extend} can be written as
 \be\label{extend2}
\mu_{k+1}(\eta_0,\eta_1,\ldots,\eta_{k+1}) 
 = \mu_k(\eta_0,\eta_1,\ldots,\eta_k)
   \mu_k(\eta_{k+1}\mid\eta_1,\ldots\eta_{k}),
 \ee
  where the conditional probability
$\mu_k(\zeta_k\mid\zeta_0,\ldots\zeta_{k-1})$ is defined to be zero
whenever $\mu_{k-1}(\zeta_0,\ldots\zeta_{k-1})=0$, and then one sees that 
the extension procedure described in the 
proof of \cthm{jll} is equivalent to 
 \be\label{MC}
\mu_j(\eta_0,\eta_1,\ldots,\eta_j)
  = \mu_k(\eta_0,\ldots,\eta_k)
        \prod_{i=1}^{j-k}\mu_k(\eta_{i+k}\mid\eta_i,\ldots,\eta_{i+k-1}).
 \ee
 Equation \eqref{MC} says that we may regard the extension procedure as
defining a Markov chain with state space $\{0,1\}$ having memory, for which
the transition probabilities depend on states at the previous $k-1$
time steps.  The TI extension $\mu$ of $\mu_k$ given in \cthm{jll} is then
just the invariant measure on sample paths for this chain. Equivalently one
may think of a 1-step Markov chain with state space $X_k$.

\begin{remark}\label{spins} It is sometimes convenient to use spin
notation rather than particle (i.e., lattice gas) notation in describing
configurations in $X_k$ or $X$.  As usual if $\eta$ is a particle
variable taking values in $\{0,1\}$ we introduce a corresponding spin
variable $\sigma=2\eta-1$ taking values in $\{-1,+1\}$.  This is
convenient in particular because it permits us to write a measure $\mu_k$
directly in terms of the corresponding spin correlations:
 \be\label{sform}
\mu_k(\sigma_0,\ldots,\sigma_k)
  =2^{-(k+1)}\biggl(1+\sum_{A\subset\Lambda_k,A\ne\emptyset}
              \langle\sigma_A\rangle\,\sigma_A\biggr),
 \ee
 where as usual $\sigma_A=\prod_{i\in A}\sigma_i$ and the spin
correlation $\langle\sigma_A\rangle$ denotes the expectation
$\mu_k(\sigma_A)$ of $\sigma_A$ in the measure $\mu_k$.  One may think of
the expectations $\langle\sigma_A\rangle$ in \eqref{sform} as parameters
which determine the measure; the condition that $\mu_k$ be LTI is simply
that $\langle\sigma_A\rangle=\langle\sigma_B\rangle$ whenever $B$ is a
translate of $A$.  Note, however, that in using the form \eqref{sform} to
construct a measure with certain given $\langle\sigma_A\rangle$ one must
check that it assigns nonnegative probability to each configuration.
\end{remark}

\begin{remark}\label{Omega} (a) If $\Omega$ is any finite set then
everything said above extends immediately to measures defined on
$\Omega^{\Lambda_k}$.  As an example, take $\Omega=\{0,1\}^{\Lambda_m}$
so that $\Omega^{\Lambda_k}$ may be identified with the space
$\{0,1\}^{\Lambda_k\times\Lambda_m}$ of particle configurations on a
$k\times m$ rectangle.  If such a measure is LTI under translations in
the horizontal direction then it may be extended by an obvious
generalization of \eqref{MC} to a measure on configurations on
$\Lambda\times\Lambda_m$, where $\Lambda$ is any interval containing
$\Lambda_k$, and hence to a measure on configurations on
$\ZZ\times\Lambda_m$ as in the proof of \cthm{jll}.  However, even if the
original measure is also LTI in the vertical direction, this extension
procedure need not maintain the LTI property for translations in the
vertical direction.  \cex{22} (see \csect{noextend}) is of this nature:
an extension from the original $2\times2$ square to a $3\times2$
rectangle destroys the vertical LTI property.

 \smallskip\noindent
 (b) The extension method in the Markovian form \eqref{MC} is
well defined for a large class of spaces $\Omega$; it suffices that the
$\mu_k(\eta_k\mid\eta_0,\ldots\eta_{k-1})$ be well defined (regular)
conditional probabilities.  This will be true, for example, if $\Omega$ is
a complete separable metric space \cite{PS} or, essentially equivalently, a
{\it standard Borel space} \cite{RBS}.  Consider, for example, a LTI
measure for a strip in $\bbz^2$, that is, a LTI measure on
$\{0,1\}^{\ZZ\times\Lambda_m}=\left(\{0,1\}^{\ZZ}\right)^{\Lambda_m}$;
since $\{0,1\}^{\ZZ}$ is a standard Borel space we may then use \eqref{MC}
to extend vertically to a measure on configurations on $\{0,1\}^{\bbz^2}$.
In this context the original measure might be obtained by beginning with a
LTI measure on a rectangle $\{0,1\}^{\Lambda_k\times\Lambda_m}$ and making
the maximal entropy extension in the horizontal direction.  As noted above,
in general the LTI property in the vertical direction will not be preserved
by this extension, but if it is, the above remarks apply.  \end{remark}

 \subsection{Entropy\label{sect_entropy}}

 Let us denote the Gibbs-Shannon entropy of a probability 
measure $\mu$ on a discrete set $Q$ (typically for us a set of
configurations) by $S(\mu)$:
 \be\label{entropy}
  S(\mu)=-\sum_{q\in Q}\mu(q)\log\mu(q).
 \ee
 The entropy of the extension $\mu_j$ given in \eqref{extend2} or
 \eqref{MC} is easily found to be 
 \be\label{ent}
 S(\mu_j)=S(\mu_k) + (j-k)[S(\mu_k)-S(\mu_{k-1})],\qquad j\ge k.
 \ee
 Now recall the strong subadditivity inequality for entropy \cite{L75}:
if $\mu$ is a probability measure on a product
$Q=Q_1\times Q_2\times Q_3$ then the marginal measures $\mu_{12}$,
$\mu_{23}$, and $\mu_2$ on $Q_1\times Q_2$, $Q_2\times Q_3$, and $Q_2$,
respectively, satisfy
 \be
S(\mu)\le S(\mu_{12})+S(\mu_{23})-S(\mu_2).
 \ee
  This implies that if $\nu_j$ is any measure on $X_j$ for which the
marginal $\nu_j^{(i)}$ on ${\{0,1\}^{\{i,i+1,\ldots,i+k\}}}$ satisfies
 \be\label{bpti}
\nu_j^{(i)}(\eta_i,\ldots,\eta_{i+k})
   =\mu_k(\eta_i,\ldots,\eta_{i+k}), \qquad i=0,1,\ldots j-k,
 \ee
 then $S(\nu_j)$ is at most equal to the right hand side of \eqref{ent}.
Thus $\mu_j$ is in fact the maximal entropy extension of $\mu_k$ to $X_j$
under the constraint \eqref{bpti}; there is a unique such extension
by the strict concavity of the entropy.  Note that satisfaction of
\eqref{bpti} is necessary for $\nu_j$ to be LTI but is not sufficient,
since it does not impose translation invariance for sets of diameter
greater than $k+1$.  Nevertheless, we know from \cthm{jll} that the measure
$\mu_j$, which maximizes the entropy under only the constraint
\eqref{bpti}, is LTI, so that the same measure would be obtained by
maximizing the entropy over all LTI measures.

 \subsection{Gibbs measures\label{gibbs}}

Now suppose that $\mu_k$ does not assign zero probability to any
configuration.  Then it is easy to see that the Markovian extension $\mu$
on $X$ provided by the proof of Theorem 1 is an infinite volume
Gibbs state (see \cite{Ruelle} for definitions and results) for the
translation invariant interaction energy $U$ defined formally by
 \be
 U(\eta)=\sum_{i\in\bbz}\Phi(\eta_i,\ldots,\eta_{i+k}),\quad\eta\in X,
 \ee
 where the interaction potential $\Phi$ is given by
 \be\label{Phi}
\Phi(\eta_0,\ldots,\eta_k)=-\ln\mu_k(\eta_k\mid\eta_0,\ldots,\eta_{k-1}).
 \ee
If $\mu_k$ assigns zero measure to some configurations then the same
arguments and conclusions apply. We may either simply allow
for the interaction $\Phi$ to sometimes be $+\infty$ and consider the
corresponding Gibbs states, or we may follow \cite{Ruelle} and restrict
the allowable configurations to those whose restriction to any interval
of $k+1$ sites has nonzero probability under $\mu_k$.

Conversely, suppose that one starts with a TI measure $\nu$ on $X$ which is
a Gibbs state for some interaction potential with interactions of range at
most $k$; again we may allow the interaction potential to take the value
$+\infty$ or alternatively may restrict the allowable configurations by
local constraints of range $k$.  The marginal $\nu_k$ of $\nu$ on $X_k$
will then be LTI.  Since $\nu$ is Gibbs, it is Markovian of range $k$
\cite{Ruelle} and hence agrees with the Markovian extension of $\nu_k$ to
$X$ provided by the proof of Theorem 1.  Note that the original interaction
in terms of which $\nu$ is defined need not agree with the interaction
\eqref{Phi} naturally associated the Markovian extension procedure applied
to $\nu_k$; the same Gibbs measure will arise from different interactions
if they give the same answer, up to a constant, when they are used to
compute the energy in a finite volume with boundary condition (the constant
may depend on the boundary condition) \cite[Section 2.4]{Georgii}.

\section{Extension of measures on the Bethe lattice\label{trees}}

As discussed in \crem{Omega}(a) above, the Markovian extension technique of
\csect{maxent} does not in general yield extensions of measures in higher
dimensions.  It does do so, however, on the Bethe lattice.  To discuss this
we fix $q\ge1$ and let $\bbt$ denote the (infinite) Bethe lattice in which
every vertex has degree $q+1$.  For a subgraph $H$ of $\bbt$ we may then
consider a measure $\mu_H$ on $X_H=\Omega^H$ and ask for an extension to
$X=\Omega^{\bbt}$ which is invariant under the group of all isomorphisms of
$\bbt$.  Again there is an obviously necessary {\it local invariance (LI)}
condition for the existence of such an extension: that if $H_1$ and $H_2$
are connected isomorphic subgraphs of $H$, with $\phi:H_1\to H_2$ an
isomorphism, and $\nu_j$ is the marginal of $\mu_H$ on $X_{H_j}$, then the
induced mapping $\phi_*$ of measures satisfies $\phi_*\nu_2=\nu_1$.

\def\mini#1#2#3{\beginpicture\setcoordinatesystem units <10pt,10pt> 
\setplotarea x from -0.9 to 0.9, y from -0.2 to 0.7 
\put{$\scriptstyle#1$} at -0.866 0.5 \put{$\scriptstyle#2$} at 0 0 
   \put{$\scriptstyle#3$} at 0.866 0.5
\plot -0.6928 0.4 -0.1732 0.1 / \plot 0.1732 0.1 0.6928 0.4 /
\endpicture}

 \begin{figure}[ht!]
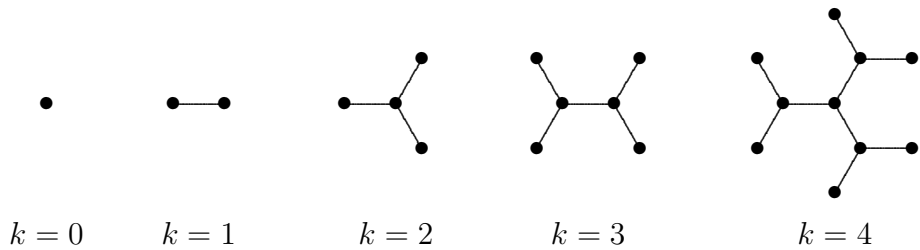
  \input ptifig_trees2
 \bigskip
 \caption{Full subgraphs of diameter $k$, $k=0,\ldots,4$, when
$q=2$.\label{treefig}} \end{figure}

A special role in what follows will be played by {\it full subgraphs} of
$\bbt$: the maximal subgraphs of with fixed diameter $k$.  Any two such
are isomorphic (see \cfig{treefig}).  When $k$ is even, such a full
subgraph  has $(q+1)$-fold symmetry about a central vertex; when $k$
is odd, it has 2-fold symmetry across a central edge.  We fix some vertex
$v_0$ in $\bbt$ and some edge $e_0$ incident on it, and for
$k=0,1,\ldots$ let $T_k$ denote the particular full subgraph 
which has central vertex $v_0$ if $k$ is even and central edge $e_0$ if
$k$ is odd, so that in particular $T_0\subset T_1\subset \cdots$.  We
abbreviate $X_{T_k}$ as $X_k$.  If $q=1$ then the Bethe lattice $\bbt$ is
just $\bbz$ and $T_k$ is the interval $\Lambda_k$ of $k+1$ sites.  For
general $q$ the graphs $T_k$ play a role for extensions to $\bbt$ similar
to that of the $\Lambda_k$ for extension to $\bbz$: when one begins with
an LI measure on $X_k$ there is no obstruction to extension.

\begin{theorem}\label{treeext} {\sl Let $\mu_k$ be an LI measure on $T_k$,
$k=0,1,\ldots$.  Then $\mu_k$ has an extension to an invariant measure
$\mu$ on $X$.} \end{theorem}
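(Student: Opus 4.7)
The plan is to mimic the proof of \cthm{jll}: iteratively construct, for each $j\ge k$, an LI measure $\mu_j$ on $X_{T_j}$ such that the marginal of $\mu_{j+1}$ on $T_j$ is $\mu_j$, and then apply Kolmogorov's extension theorem to obtain the isomorphism-invariant measure $\mu$ on $X=\Omega^{\bbt}$. The inductive step from $\mu_k$ to $\mu_{k+1}$ is the tree analog of \eqref{extend}, and rests on the combinatorial observation that $T_{k+1}$ decomposes as a union of $m$ translates $T_k^{(1)},\ldots,T_k^{(m)}$ of $T_k$ whose pairwise (and all higher) intersections coincide with a single translate of $T_{k-1}$ sitting at the center of $T_{k+1}$: for $k$ even, $m=2$ and the translates are the balls of radius $k/2$ around the endpoints of the central edge of $T_{k+1}$; for $k$ odd, $m=q+1$ and the translates are centered on the $q+1$ edges incident to the central vertex of $T_{k+1}$. (For $k=0$ one interprets $T_{-1}$ as empty, yielding the i.i.d.\ extension $\mu_1(\eta_{v_0},\eta_{v_1})=\mu_0(\eta_{v_0})\mu_0(\eta_{v_1})$.)

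Define
\[
\mu_{k+1}(\eta)=\frac{\prod_{i=1}^{m}\mu_k(\eta|_{T_k^{(i)}})}{\mu_{k-1}(\eta|_{T_{k-1}})^{m-1}},
\]
with the convention that this is zero when the denominator vanishes. Using the LI of $\mu_k$, the formula can be rewritten as
\[
\mu_{k+1}(\eta)=\mu_{k-1}(\eta|_{T_{k-1}})\prod_{i=1}^{m}\mu_k\bigl(\eta|_{T_k^{(i)}\setminus T_{k-1}}\bigm|\eta|_{T_{k-1}}\bigr),
\]
which exhibits $\mu_{k+1}$ as the distribution obtained by first sampling $\eta|_{T_{k-1}}$ from $\mu_{k-1}$ and then, independently across $i$, sampling each $\eta|_{T_k^{(i)}\setminus T_{k-1}}$ from the conditional $\mu_k(\,\cdot\mid\eta|_{T_{k-1}})$ (well defined because the sets $T_k^{(i)}\setminus T_{k-1}$ are disjoint). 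This makes it immediate that $\mu_{k+1}$ is a probability measure and, by summing out everything outside one factor, that its marginal on each $T_k^{(i)}$ is $\mu_k$. LI of $\mu_{k+1}$ then follows from the symmetry of the construction in the $T_k^{(i)}$'s together with LI of $\mu_k$.

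Iterating produces a consistent family $(\mu_j)_{j\ge k}$, since the central copy of $T_j$ inside $T_{j+1}$ is $T_j$ itself; Kolmogorov then yields the measure $\mu$ on $X$, and invariance of $\mu$ under the full isomorphism group of $\bbt$ reduces to LI of the finite-volume $\mu_j$. The main obstacle I anticipate is the inductive verification of LI: one must check that the marginal of $\mu_{k+1}$ on an arbitrary connected subgraph of $T_{k+1}$ depends only on its isomorphism class in $\bbt$, and for subgraphs not lying inside a single $T_k^{(i)}$ this requires a careful bookkeeping argument combining LI of $\mu_k$ with the explicit structure of the decomposition.
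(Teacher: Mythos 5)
Your construction is exactly the one in the paper: the decomposition of $T_{k+1}$ into $r_k$ copies of $T_k$ ($r_k=2$ for $k$ even, $q+1$ for $k$ odd) intersecting pairwise in $T_{k-1}$, the product-over-copies formula \eqref{ext1}, and its conditional rewriting \eqref{ext2} are all correct and match the paper. The gap is the step you yourself flag and defer: the ``careful bookkeeping argument'' verifying that $\mu_{k+1}$ is again LI. As written, your induction hypothesis is full local invariance of $\mu_j$ for every $j$, and both the continuation of the induction and your final appeal to Kolmogorov plus ``invariance reduces to LI of the finite-volume $\mu_j$'' rest on that unproved verification. Checking directly that the marginal of $\mu_{k+1}$ on an arbitrary connected subgraph of $T_{k+1}$ depends only on its isomorphism type is genuinely awkward for subgraphs straddling several copies $T_k^{(i)}$, so this is not a routine omission; it is the crux of the proof.

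The paper's resolution is to weaken the induction hypothesis so that the hard verification never arises. One tracks only two properties: (i) $\mu_j$ is invariant under the automorphism group of $T_j$, and (ii) its marginal on $T_{j-1}$ is invariant under the automorphism group of $T_{j-1}$. These propagate trivially through the extension step: any automorphism of $T_{j+1}$ permutes the copies $T_j^{(i)}$ and preserves $T_{j-1}$, so (i) is manifest from the symmetry of \eqref{ext1}, and (ii) for $\mu_{j+1}$ is immediate because its marginal on $T_j$ is $\mu_j$, which satisfies (i). Full LI of the limit is then recovered at the end from the group-theoretic fact that the isomorphism group of $\bbt$ is generated by the stabilizer $G_{v_0}$ of the central vertex together with the stabilizer $G_{e_0}$ of the central edge, and each of these acts on every finite $T_j$ through automorphisms of $T_j$, where (i) applies. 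If you want to complete your argument, either adopt this weakened induction hypothesis or supply the deferred bookkeeping; the former is much cleaner and is what the paper does.
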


  Note that, because the invariance we consider for $\bbt$ is invariance
under all isomorphisms, the $q=1$ case of \cthm{treeext} corresponds to the
case in \cthm{jll} of symmetric extensions of symmetric measures.

\begin{proofof}{\cthm{treeext}} We first extend $\mu_k$ to a measure
$\mu_{k+1}$ on $T_{k+1}$; in doing so we will need only two consequences of
the local invariance of $\mu_k$: (i)~$\mu_k$ is invariant under
automorphisms of $T_k$, and (ii)~the marginal of $\mu_k$ on $X_{k-1}$ is
invariant under automorphisms of $T_{k-1}$. (In fact, (i) and (ii) imply
local invariance, but we do not need this.)  $T_{k+1}$ contains $r_k$
copies $T_k^1,\ldots,T_k^{r_k}$ of $T_k$ (one of which is $T_k$ itself),
where $r_k=2$ if $k$ is even and $r_k=q+1$ if $k$ is odd, and these
intersect in $T_{k-1}$.  From (i) it follows that we may speak without
ambiguity of the copy $\mu_k^i$ of $\mu_k$ on $T_k^i$, and from (ii) that
the marginal of any $\mu_k^i$ on $T_{k-1}$ is independent of $i$.  Then the
extension formula is
 \begin{align}
 \mu_{k+1}(\eta)&=\mu_k(\eta_{T_{k-1}})^{-(r_k-1)}
    \prod_{i=1}^{r_k}\mu_k^i(\eta_{T_k^{i}})\label{ext1}\\
   &=\mu_k(\eta_{T_{k-1}})
    \prod_{i=1}^{r_k}\mu_k^i(\eta_{T_k^{i}\setminus T_{k-1}} 
        \mid \eta_{T_{k-1}}).\label{ext2}    
 \end{align}
 where for $\eta\in X_{k+1}$ we have written $\eta_H$ for the restriction
of $\eta$ to any subgraph $H$ of $T_{k+1}$.  Note that \eqref{ext1} is a
direct generalization of \eqref{extend}, while \eqref{ext2} is a
symmetrized version of \eqref{extend2}.

 It is easy to see that $\mu_{k+1}$ is indeed an extension of $\mu_k$.
Moreover, it again satisfies properties (i) and (ii); (i) follows from the
fact that any automorphism of $T_{k+1}$ will permute the $T^i_k$ and
preserve $T_{k-1}$, and because $\mu_{k+1}$ extends $\mu_k$, (ii) is an
immediate consequence of property (i) for $\mu_k$.  Thus we may apply the
above construction repeatedly to extend $\mu_k$ to any $X_j$ with $j>k$,
maintaining properties (i) and (ii) at each step, and thus obtain the
extension $\mu$ on $X$ as a limit.  $\mu$ is invariant under the group $G$
of isomorphisms of $\bbt$, because, as is easily seen, $G$ is generated by
the subgroup $G_{v_0}$ of $G$ of isomorphisms that fix the central vertex
$v_0$, together with the subgroup $G_{e_0}$ of isomorphisms that fix the
central edge $e_0$, and invariance of $\mu$ under these subgroups follows
from properties (i) and (ii) of the finite extensions.

\end{proofof}

\begin{remark}\label{maxmar} There is an alternative but equivalent
version of the last step of this proof, that is, of the extension beyond
$X_{k+1}$.  The procedure is of course again inductive; we consider the
extension from $X_j$ to $X_{j+1}$, $j\ge k+1$.  One obtains $T_{j+1}$
from $T_j$ by adding $q$ new edges and vertices to each vertex $v$ in a
certain subset $V_j$ of the vertices of $T_j$ of degree one.  Each such
$v$, together with the new edges and vertices attached to it, is part of
a unique copy $\widehat T_k^{v}$ of $T_k$ in $T_{j+1}$, and the extension
is
 \be
\mu_{j+1}(\eta)=\mu_j(\eta_{T_j})
   \prod_{v\in V_j}\mu_k(\eta_{\widehat T_k^{v}\setminus H^{v}} 
    \mid \eta_{H^{v}}),
 \ee
 where $H^{v}=\widehat T_k^{v}\cap T_j$. Thus the distribution determined
by $\mu_{j+1}$ on the new vertices attached to $v$ depends only on the
configuration in $H^v$.  From this it follows that the distribution on
the complement of any $T_j$ which arises from  conditioning the
extension $\mu$ on the configuration in $T_j$, actually depends only on
the configuration on vertices of $T_j$ a distance at most $k$ from the
complement.  In this sense the extension constructed here is maximally
Markovian.  \end{remark}

It is straightforward to calculate the entropy of these extensions, through
the recurrence 
 \be
S(\mu_{j+1})=S(\mu_j)+(r_j-1)[S(\mu_j)-S(\mu_{j-1})],
 \ee
 where $\mu_{k-1}$ is the marginal of $\mu_k$ on $X_{k-1}$. For $q=1$ the
entropy grows linearly with $q$; see \csect{sect_entropy}.  For $q\ge2$,
$S(\mu_j)$ grows as $q^{j/2}$ for large $j$; more specifically, with
$\Delta S=[S(\mu_k)-S(\mu_{k-1})]$ we find that for $j\ge k$ and
$p=\lfloor(j-k+1-(k\bmod2))/2\rfloor$,
 \be\label{bent}
  S(\mu_j)=S(\mu_k)+\frac{r_{j+1}q^p-r_{k+1}}{q-1}\,q^{k\bmod2}\,\Delta S.
 \ee
  A simple calculation shows that the entropy per site is asymptotically
constant, and an argument as in \csect{sect_entropy} implies that
\eqref{bent} is the maximum possible entropy of any extension.

The considerations of \csect{gibbs} apply here essentially unchanged; in
particular, the measure on $\bbt$ constructed here is Gibbsian with
interaction potential
 \be
\Phi(\eta_{T_k})=-\ln\mu_k(\eta_{T_k\setminus T_{k-1}} 
        \mid \eta_{T_{k-1}}).
 \ee

\section{Periodic extensions in one dimension \label{perext}}

We now turn again to the study of TI extensions to $X=\Omega^{\bbz}$ of an
LTI measure $\mu_k$ defined on the space $X_k$ of configurations on the
interval $\Lambda_k$. We consider in particular extensions which are of a
different character from the maximal entropy extensions considered in
\csect{maxent}: those supported on periodic configurations in $X$.  We will
call such a measure a {\it periodic configuration} (PC) measure.  Within
the class of PC measures we single out the {\it basic periodic
configuration} (BPC) measures: a BPC measure is obtained from a particular 
$p$-periodic configuration by giving weight $1/p$ to each
of its $p$ translates.  The set
of PC measures is clearly convex and its extreme points are the BPC
measures; since the set of BPC measures is countable, every PC measure
$\nu$ has a representation
 \be\label{PC}
\nu=\sum c_\alpha\nu_\alpha,\qquad c_\alpha\ge0,\quad\sum c_\alpha=1,
 \ee
 where the sums run over the set $\{\nu_\alpha\}$ of all BPC measures. 

In this section and the next we will for simplicity consider only the case
of particle configurations, i.e., $\Omega=\{0,1\}$; the extension of the
results to any finite $\Omega$ is straightforward.  We will write a
configuration $\eta\in X_k$ as a string $\eta_0\eta_1\ldots\eta_k$ (rather
than as a $(k+1)$-tuple $(\eta_0,\ldots,\eta_k)$) and will similarly write
$\eta\in X$ as a doubly infinite string; since we do not use any products
of the $\eta_i$ no confusion should arise.

 \subsection{BPC  and extremal LTI measures on
$X_k$\label{structure}}

Let $M_k$ denote the set of all LTI probability measures on $X_k$; $M_k$ is
manifestly a convex polytope.  Since the LTI condition \eqref{pti} and the
normalization $\sum_{\eta\in X_k}\mu_k(\eta)=1$ represent $2^k$ independent
constraints on the $2^{k+1}$ variables $\mu_k(\eta)$, $\eta\in X_k$, $M_k$
has dimension $2^k$.  We wish to characterize the extreme points of this
polytope.  We will say that a periodic configuration $\eta\in X$ and the
corresponding BPC measure $\nu$ supported on the translates of $\eta$ are
{\it $k$-primitive} if no segment of length $k$ is repeated within one
(minimal) period $p$ of $\eta$, that is, if
$\eta_i\eta_{i+1}\cdots\eta_{i+k-1}=\eta_j\eta_{j+1}\cdots\eta_{j+k-1}$
only when $j-i$ is a multiple of $p$.  Note that then $p\le2^k$.  We will
prove shortly, using the concept of a {\it de Bruijn graph}, that the
extreme points of $M_k$ are precisely the marginals $\pi_k\nu$ with $\nu$ a
$k$-primitive BPC measure.

The (binary) {\it De Bruijn graph of order $k\ge1$} (or sometimes {\it
dimension $k$}) \cite{deBruijn,Good}, $G_k$, is the directed graph with $2^k$
vertices and $2^{k+1}$ edges, labeled respectively by all binary strings of
length $k$ and length $k+1$, in which for any binary digits $a$ and $b$ and
binary string $\theta$ of length $k-1$ the edge $a\theta b$ runs from
vertex $a\theta$ to vertex $\theta b$ (see \cfig{dbg}).  Note in particular
that $G_k$ contains two loops, labeled respectively by $00\cdots0$ and
$11\cdots1$, but no multiple edges.  Since the edges of $G_k$ are labeled
by the elements of $X_k$, it is clear that any probability measure $\mu_k$
on $X_k$ corresponds to an assignment of a nonnegative {\it current}
$j_\eta$ to each edge $\eta$ of $G_k$ in such a way that
$\sum_\eta j_\eta=1$; the correspondence is of course via
$j_\eta=\mu_k(\{\eta\})$.  The terminology ``current'' is appropriate
because one checks easily that $\mu_k$ is LTI if and only if the current is
conserved at each vertex of $G_k$, that is, if for each vertex $\xi$ the
sum of the currents on the two edges entering $\xi$ is equal to the sum of
the currents on the two edges leaving $\xi$.  (Such a current assignment is
called a {\it circulation} in the theory of graphs.)

\bigskip
\begin{figure}[ht!]
\hskip20pt\includegraphics[width=9.2cm]{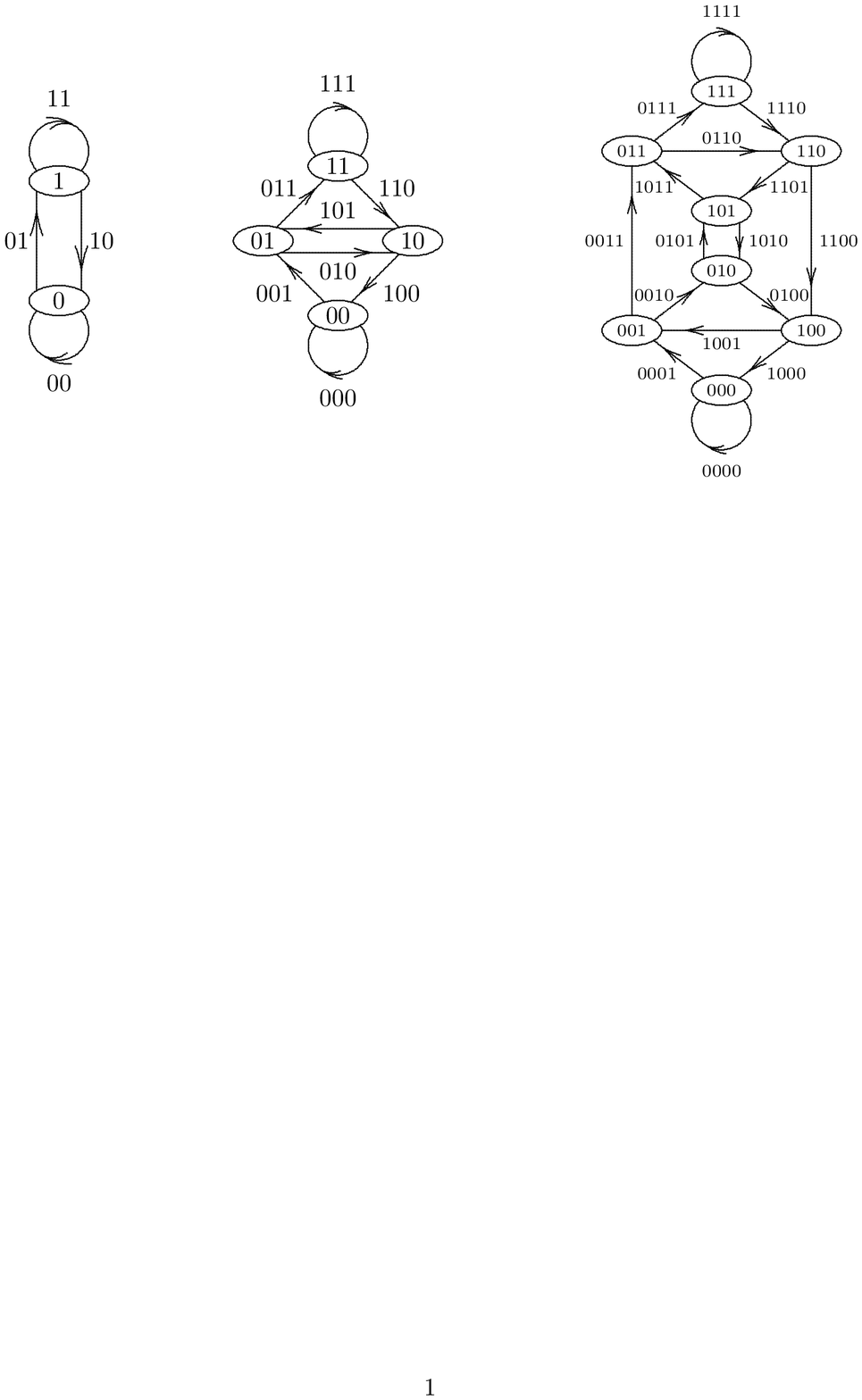}
\caption{The De Bruijn graphs $G_1$, $G_2$, and $G_3$.\label{dbg}}
\end{figure}

 Suppose that $\P$ is a closed path in $G_k$, that is, a sequence of $|\P|$
edges $\eta^{(1)},\ldots,\eta^{(|\P|)}$ in that order (possibly with
repetitions), and that $\P$ is {\it minimal} in the sense that there is no
shorter closed path $\P'$ such that $\P$ is obtained by tracing $P'$
several times.  Each $\eta^{(j)}$ is a string of $k+1$ symbols (see
\cfig{dbg}), and we may obtain from $\P$ a periodic configuration by
tracing $\P$ repeatedly, recording the first symbol on each edge
encountered: with $\eta^{(j)}=\eta^{(j)}_0\cdots\eta^{(j)}_k$ this
configuration is
$\cdots\eta^{(|\P|)}_0\eta^{(1)}_0\cdots\eta^{(|\P|)}_0\eta^{(1)}_0\cdots\in
X$.  Let $\nu_\P$ denote the BPC measure associated with this
configuration; the map $\P\mapsto\nu_\P$ establishes a bijective
correspondence between minimal closed paths and BPC measures.  As usual a
{\it cycle} $\C$ in $G_k$ is a closed path with no repeated vertices; a
loop is considered to be a cycle.

\begin{theorem}\label{measures} The mapping $\C\to\nu_\C$ gives a bijective
correspondence between the cycles in $G_k$ and the $k$-primitive BPC
measures on $X$.  The extreme points of $M_k$ are precisely the marginals
$\pi_k\nu_\C$, with $\C$ a cycle in $G_k$; equivalently, every LTI measure
$\mu_k$ on $X_k$ has the form $\pi_k\nu$, where $\nu$ is a (finite) convex
combination of the $k$-primitive BPC measures $\nu_\C$.  In particular,
every such $\mu_k$ has an extension to a PC measure.  \end{theorem}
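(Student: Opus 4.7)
The plan is to exploit the identification of LTI measures on $X_k$ with nonnegative unit-mass circulations on $G_k$, which was already set up in the paragraph preceding the theorem, and then apply a classical flow-decomposition argument.

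First I would pin down the bijection in the first sentence. Given a cycle $\C=e_1 e_2\cdots e_p$ in $G_k$ with distinct vertices $v_0,v_1,\ldots,v_{p-1}$, the associated periodic word (obtained by reading the leading symbol of each successive edge) has period $p$, and its successive length-$k$ windows are precisely the vertex labels $v_0,\ldots,v_{p-1}$; these being distinct translates into $k$-primitivity, and minimality of the period. Conversely, a $k$-primitive BPC measure of period $p$ gives $p$ distinct length-$k$ windows that visit distinct vertices along a closed edge-walk, i.e.\ a cycle. Loops (for the constant configurations $00\cdots$ and $11\cdots$) are the $p=1$ case and cause no trouble. So $\C\mapsto\nu_\C$ is a bijection between cycles and $k$-primitive BPC measures, and under the current interpretation its image measure $\pi_k\nu_\C$ corresponds to the uniform circulation $1/p$ on the edges of $\C$.

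Next I would invoke the standard flow-decomposition result: every nonnegative circulation $j$ on a finite directed graph can be written as a finite nonnegative combination $j=\sum_\alpha t_\alpha \mathbf{1}_{\C_\alpha}$ of indicator circulations of directed cycles. The short proof I would sketch is inductive on $|\mathrm{supp}\,j|$: pick any edge $e$ with $j_e>0$; starting from the head of $e$ and repeatedly leaving each reached vertex along some edge of positive current (possible by current conservation), one must eventually revisit a vertex, producing a cycle $\C$ in $\mathrm{supp}\,j$; subtract $(\min_{e'\in\C} j_{e'})\mathbf{1}_\C$ and iterate. Applying this to the circulation $j_\eta=\mu_k(\eta)$ of an arbitrary LTI $\mu_k$, and normalizing, gives
\begin{equation}
\mu_k=\sum_\alpha c_\alpha\,\pi_k\nu_{\C_\alpha},\qquad c_\alpha\ge 0,\ \sum_\alpha c_\alpha=1,
\end{equation}
so $\mu_k$ is a convex combination of the measures $\pi_k\nu_\C$. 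This already gives the final assertion: taking $\nu=\sum_\alpha c_\alpha\nu_{\C_\alpha}$ (a PC measure by \eqref{PC}) yields a PC extension of $\mu_k$ on all of $\bbz$.

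Finally I would verify that each $\pi_k\nu_\C$ is extreme in $M_k$ and that there are no other extreme points. If $\pi_k\nu_\C=\alpha\mu'+(1-\alpha)\mu''$ with $\mu',\mu''\in M_k$ and $0<\alpha<1$, then the circulations of $\mu'$ and $\mu''$ are supported on the edges of $\C$; but current conservation at each vertex of $\C$ forces all edge currents along $\C$ to be equal, so the only unit-mass LTI measure with that support is $\pi_k\nu_\C$ itself, giving $\mu'=\mu''=\pi_k\nu_\C$. Conversely, any extreme point must appear as one of the $\pi_k\nu_{\C_\alpha}$ in the decomposition above, so the extreme points are exactly the $\pi_k\nu_\C$ for $\C$ a cycle.

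The only step that requires any real care is the flow decomposition, and specifically the observation that the cycle produced in the inductive step lies entirely in $\mathrm{supp}\,j$; this follows immediately from current conservation, so I do not expect genuine difficulty. The rest is bookkeeping about what ``$k$-primitive'' translates to in $G_k$.
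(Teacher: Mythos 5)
Your proposal is correct and follows essentially the same route as the paper: the paper also identifies cycles with $k$-primitive BPC measures via the no-repeated-vertex condition, extracts a cycle from the support of the circulation using current conservation, subtracts the minimal current on that cycle, and iterates until the support is exhausted. The only cosmetic difference is that you spell out the extremality of $\pi_k\nu_\C$ (via current conservation forcing equal currents around the cycle), which the paper asserts as clear.
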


\begin{proof} The condition that a cycle $\C$ contain no repeated
vertices is precisely the condition of $k$-primitivity on $\nu_\C$, and
the bijective nature of the correspondence $\C\leftrightarrow\nu_\C$ is
clear.  If $|\C|=p$ then the measure $\pi_k\nu_\C$ gives weight $1/p$ to
each edge of $\C$ and weight $0$ to all other edges, and is thus clearly
an extreme point of $M_k$.  Conversely, given an element $\mu_k\in M_k$,
the set of edges $\eta$ of $G_k$ for which $\mu_k(\eta)>0$ must by
current conservation contain a cycle $\C$, say of length $p$, and if
$\alpha$ is the smallest current assigned by $\mu_k$ to any edge of $\C$
then we may write
 \be\label{mustar}
\mu_k=(p\alpha)\pi_k\nu_\C+(1-p\alpha)\mu_k^*,\qquad\hbox{where}\quad
   \mu_k^*=\frac1{1-p\alpha}(\mu_k-p\alpha\pi_k\nu_\C),
 \ee
 thus representing $\mu_k$ as a convex combination of the LTI probability
measures $\nu_\C$ and $\mu_k^*$.  We may apply the same construction to
$\mu_k^*$ and by repeating this process  obtain a representation of
$\mu_k$ as a convex combination
 \be\label{cc}
\mu_k=\sum_\C\beta_\C\pi_k\nu_\C,\qquad \sum_\C\beta_\C=1,
 \ee
 as required; the process must terminate because $\mu_k^*$ as
defined by \eqref{mustar} gives zero probability to at least one more
configuration than does $\mu$.  \end{proof}

\begin{remark}\label{cycles} (a) It is known \cite{SG,M} that $G_k$ contains
cycles with every length $p$ satisfying $1\le p\le2^k$.  In particular,
$G_k$ contains two cycles of length 1, the loops mentioned above, which
correspond to the two period-one configurations in $X$, and
$2^{(2^{k-1}-k)}$ cycles of length $2^k$ \cite{deBruijn}.

 \smallskip\noindent
 (b) Although it follows from (a) that $M_k$ contains at least
$2^{(2^{k-1}-k)}$ extreme points, a result of Carath\'eodory \cite{G,C} implies
that, because $M_k$ is of dimension $2^k$, every LTI measure $\mu_k$ can
be written as a convex combination of at most $2^k+1$ measures
$\pi_k\nu_\C$.

  \end{remark}

 \subsection {Extensions of LTI measures to a ring\label{torext}}

Let $\bbz_L$ denote the ring (one-dimensional discrete torus) with $L$ sites:
$\bbz_L=\{0,\ldots,L-1\}$, with translation defined via periodic boundary
conditions.  Rather than considering extensions of $\mu_k$ to TI measures
on $X$ one may instead ask whether there exists an integer $L\ge k+1$
such that $\mu_k$ has an extension to a TI measure on
$Y_L=\{0,1\}^{\ZZ_L}$, the set of configurations on $\ZZ_L$, and, if so,
for what values of $L$ this is true.  The first question is clearly
equivalent to that of whether there is an extension to a measure $\nu$ on
$\ZZ$ which is a convex combination of a finite number of BPC measures;
if so, $L$ may then be taken to be the least common multiple of their
periods.  \cthm{measures} implies that the answer is affirmative; here
we discuss a few questions about the values of $L$.

\begin{theorem} \label{torus} The smallest integer $L\ge k+1$ such that
every LTI measure on $X_k$ may be extended to a TI measure on $Y_L$ is
 \be\label{lcm}
L_0(k)=\lcm\{1,2,3,\ldots,2^k-1, 2^k\}.
 \ee
  \end{theorem}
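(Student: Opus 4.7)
My plan is to exploit \cthm{measures} and the de~Bruijn graph description to reduce the question to a divisibility condition on cycle lengths.

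First, for the upper bound $L \le L_0(k)$, I would take any LTI measure $\mu_k$ and invoke \cthm{measures} to write $\mu_k = \sum_\C \beta_\C\,\pi_k\nu_\C$, a finite convex combination indexed by cycles $\C$ in $G_k$ of lengths $p_\C \le 2^k$. For each cycle $\C$ of length $p_\C$, since $p_\C$ divides $L_0(k)$ I can build a configuration $\tilde\omega_\C \in Y_{L_0(k)}$ by repeating the periodic string associated with $\C$ exactly $L_0(k)/p_\C$ times; the uniform orbit measure $\tilde\nu_\C$ on its $\bbz_{L_0(k)}$-orbit is a TI measure on $Y_{L_0(k)}$ whose marginal on $X_k$ agrees with $\pi_k\nu_\C$. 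Then $\sum_\C \beta_\C \tilde\nu_\C$ is a TI extension of $\mu_k$ to $Y_{L_0(k)}$.

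For the matching lower bound I would argue that if $L$ has the required extension property then $p\mid L$ for every $p\in\{1,\ldots,2^k\}$. By \crem{cycles}(a), $G_k$ contains a cycle $\C$ of each length $p$ in this range. The extreme LTI measure $\pi_k\nu_\C$ is supported on exactly the $p$ edges of $\C$, so any TI extension $\tilde\nu$ on $Y_L$ must be supported on configurations $\omega\in Y_L$ whose every length-$(k+1)$ window is an edge of $\C$. Reading such an $\omega$ cyclically produces a closed walk of length $L$ in $\C$; since $\C$ has no repeated vertices, the only closed walks in $\C$ are integer multiples of $\C$ itself, so the walk goes around $\C$ exactly $L/p$ times, forcing $p\mid L$. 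Taking the lcm over $p\in\{1,\ldots,2^k\}$ yields $L_0(k)\mid L$, hence $L\ge L_0(k)$.

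The two bounds together give $L = L_0(k)$, and the inequality $L_0(k)\ge 2^k\ge k+1$ ensures the required constraint $L\ge k+1$. The step I expect to require the most care is verifying the support claim in the lower bound: I need to check that if the marginal of $\tilde\nu$ on $X_k$ assigns zero mass to an edge $\eta\notin\C$, then no $\omega$ in the support of $\tilde\nu$ can contain $\eta$ as a window. This follows because the marginal mass of $\eta$ equals $L^{-1}$ times the $\tilde\nu$-expectation of the (nonnegative) count of $\eta$-windows in $\omega$; vanishing of this expectation forces the count to vanish $\tilde\nu$-a.s., so each admissible $\omega$ indeed traces a closed walk in $\C$.
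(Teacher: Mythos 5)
Your proof is correct and follows essentially the same route as the paper's: decompose an arbitrary LTI measure into extreme points $\pi_k\nu_\C$ via \cthm{measures}, observe that $\nu_\C$ extends to $Y_L$ precisely when $|\C|$ divides $L$, and conclude using the fact from \crem{cycles}(a) that cycles of every length $1\le p\le 2^k$ occur in $G_k$. The paper states the divisibility criterion for a single cycle as ``easy to see''; you have simply supplied the details (the window-counting argument for the support and the closed-walk argument forcing $p\mid L$), which are sound.
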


\begin{proof} Suppose that $L\ge k+1$. It is easy to see that the measure
$\nu_\C$, with $\C$ a cycle in $G_k$, can be extended to a TI measure on
$Y_L$ if and only if $|\C|$ divides $L$, and from the representation
\eqref{cc} it follows that every LTI measure on $X_k$ can be extended if
and only if $L$ is divisible by $|\C|$ for every cycle $\C$.  The result
\eqref{lcm} now follows from \crem{cycles}(a).  \end{proof}

 \cthm{torus} does not address the natural question of the minimal value of
$L$, or of all possible values of $L$, for the extension of a given
measure $\mu_k$.  We will not try to give a general discussion of
this matter, but rather content ourselves with discussing two examples.

\begin{example}\label{ktwo} Suppose that we begin with a measure $\mu_1$
defined on two sites.  The De Bruijn graph $G_1$ has three cycles
$\C_0,\C_1,\C_2$ (see \cfig{dbg}), where $\C_0$ and $\C_1$ denote the
loops $00$ and $11$ respectively, and $\C_2$ the cycle with edges $01$
and $10$, so that $\mu_1=\sum_{i=0}^2c_i\pi_1\nu_{\C_i}$ with
coefficients in the simplex
$\Sigma=\{(c_0,c_1,c_2)\mid c_i\ge0,\ \sum_{i=0}^2c_i=1\}$. $\mu_1$ has a
TI extension to $Y_L$ for any even $L$, since the measures corresponding
to the three extreme points $(1,0,0)$, $(0,1,0)$, and $(0,0,1)$ of $\Sigma$
have extensions which give respectively probability one to $0^L$,
probability one to $1^L$, and probability one half to each of
$(01)^{L/2}$ and $(10)^{L/2}$ (where for any finite binary string $\eta$,
$\eta^n$ denotes the concatenation of $n$ copies of $\eta$).  If $L$ is
odd, say $L=2\ell+1$, then $\mu_1$ has an extension to $Y_L$ if and only
if $c_2\le1-1/L$.  For the measures corresponding to the four extreme points
$(1,0,0)$, $(0,1,0)$, $(1/L,0,1-1/L)$, and $(0,1/L,1-1/L)$ of
$\Sigma\cap\{(c_0,c_1,c_2)\mid c_2\le1-1/L\}$ all have extensions to
$Y_L$; the first two are as above and the remaining two give probability
$1/L$ to each of the $L$ translates of $0(01)^\ell$ and $1(01)^\ell$,
respectively.  Conversely, if $L$ is odd and $\mu_1$ has a TI extension
to $Y_L$ then the fact that every configuration in $Y_L$ contains either
two consecutive 1's or two consecutive 0's implies that
$c_0+c_1\ge1/L$.\end{example}

\begin{example}\label{numnth} If $\mu_k$ is supported on a union of
vertex disjoint cycles (and gives positive probability to each) then an
analysis as in the proof of \cthm{torus} shows that $\mu_k$ has an
extension to $Y_L$ if and only if $L$ is divisible by the least common
multiple of the lengths of these cycles.  The maximal length of a cycle
is $2^k$ (see \crem{cycles}(a)), but this least common multiple may be
much larger.  Using Maple, we find that for $k=1,\ldots,7$ the maximum
value $L_k$ of this least common multiple, taken over all collections of
vertex disjoint cycles, is as given in Table~\ref{Lk}; there the
factorization corresponds to the lengths of the maximizing set of cycles.
In all of these cases except $k=3$, $L_k=g(2^k)$, where $g$ is {\it
Landau's function} \cite{L}: $g(n)$ is the maximum, over all partitions
$\{n_1,\ldots,n_q\}$ of $n$, of $\lcm\{n_i,\ldots,n_q\}$.  Landau showed
that  $\log g(n)\sim\sqrt{n\log n}$ for large $n$.  We do not know
whether $L_k=g(2^k)$ for some or all $k\ge8$; the question is whether or
not there exists a set of vertex-disjoint cycles in $G_k$ whose lengths
are given by the maximizing partition of $2^k$.

\begin{table}[ht!]
\begin{center}\begin{tabular}{|r|r@{}c@{$\;=\;$}l|}
\hline
$k$&\multicolumn{3}{|c|}{$L_k$}\\
\hline
1&2&&2\\
2&4&&4\\
3&12&&$3\cdot4$\\
4&140&&$4\cdot5\cdot7$\\
5&5460&&$3\cdot4\cdot5\cdot7\cdot13$\\
6&2042040&&$3\cdot5\cdot7\cdot8\cdot11\cdot13\cdot17$\\
7&7216569360&&$5\cdot7\cdot9\cdot11\cdot13\cdot16\cdot17\cdot19\cdot31$\\
\hline
\end{tabular}\end{center}
\caption{Maximum values of $\lcm\{|\C_1|,\ldots,|\C_n|\}$ for
  $\C_1,\ldots,\C_n$ vertex disjoint cycles in $G_k$.
\label{Lk}}
\end{table}

\end{example}

 \section{Minimal entropy extensions\label{minent}}

In this section we consider TI extensions $\mu$ of $\mu_k$ with finite
total entropy, where the {\it total entropy} of any TI measure $\mu $ on
$X$ is defined by
 \be\label{edef}
S(\mu)=\lim_{j\to\infty}S(\pi_j\mu),
 \ee
 with the entropy $S(\pi_j\mu)$ given by \eqref{entropy}.  The TI extensions
 described in \cthm{measures} clearly have finite entropy.  We begin with a
 partial converse.

\begin{lemma}\label{shel} Every TI measure $\mu$ on $X$ with finite total
entropy is a PC measure.  \end{lemma}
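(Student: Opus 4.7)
My plan is to prove the statement in two steps: first show that finite total entropy forces $\mu$ to be purely atomic, and then use translation invariance to force every atom to be a periodic configuration.

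For the atomic step, I would rewrite the total entropy as an integral. Since $\mu$ is TI, $S(\pi_j\mu)$ depends only on the cardinality of the interval on which the marginal is taken, so $S(\mu)=\lim_j S(\sigma_j)$, where $\sigma_j$ is the marginal of $\mu$ on the centered interval $A_j=\{-j,\ldots,j\}$. Setting $p_j(\eta)=\sigma_j(\eta|_{A_j})$, the standard identity
\[
S(\sigma_j)=-\sum_{\xi\in X_{A_j}}\sigma_j(\xi)\log\sigma_j(\xi)
          =\int_X -\log p_j(\eta)\,d\mu(\eta)
\]
holds. Since $A_j\uparrow\bbz$, the cylinders $\{\eta'\mid\eta'|_{A_j}=\eta|_{A_j}\}$ decrease to $\{\eta\}$, so $p_j(\eta)\downarrow\mu(\{\eta\})$ pointwise and $-\log p_j(\eta)$ increases to $-\log\mu(\{\eta\})$ (with the convention $-\log 0=+\infty$). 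Monotone convergence then yields
\[
S(\mu)=\int_X -\log\mu(\{\eta\})\,d\mu(\eta).
\]
Finiteness of $S(\mu)$ forces the integrand to be finite $\mu$-a.e., i.e.\ $\mu(\{\eta\})>0$ for $\mu$-a.e.\ $\eta$, so $\mu$ is concentrated on its (at most countable) set of atoms.

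For the second step I would argue directly: letting $T$ denote the shift, if $\eta$ is an atom whose orbit $\{T^n\eta\mid n\in\bbz\}$ is infinite (i.e.\ $\eta$ is not periodic), then translation invariance gives $\mu(\{T^n\eta\})=\mu(\{\eta\})>0$ for every $n$, so the total mass $\sum_n\mu(\{T^n\eta\})$ is infinite, contradicting $\mu(X)=1$. Hence every atom is periodic, and $\mu$ is supported on periodic configurations, i.e.\ is a PC measure.

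The main obstacle is really just the bookkeeping in Step~1: the definition uses the one-sided intervals $\Lambda_j$, whose cylinders do not shrink to singletons, so invoking TI to switch to the centered intervals $A_j$ is essential before applying monotone convergence. Once the integral representation of $S(\mu)$ is in hand, both conclusions follow immediately.
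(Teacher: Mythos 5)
Your proof is correct, but it reaches the key conclusion (finite entropy implies purely atomic) by a genuinely different route from the paper. The paper argues by contraposition: it decomposes $\mu=t\mu_a+(1-t)\mu_b$ into atomic and nonatomic parts, invokes the fact that a nonatomic measure admits, for every $n$, a partition of $X$ into $n$ sets of equal $\mu_b$-measure, approximates that partition by cylinder sets, and uses monotonicity of entropy under coarsening to force $S(\pi_{[-N,N]}\mu_b)\ge\frac12\log n$, whence the nonatomic part (and hence, by concavity, $\mu$ itself) has infinite total entropy. You instead prove the exact identity $S(\mu)=\int_X-\log\mu(\{\eta\})\,d\mu(\eta)$ by writing each finite-volume entropy as an integral of $-\log p_j$ and applying monotone convergence along centered intervals (the reduction from $\Lambda_j$ to $A_j$ via translation invariance, which you flag, is legitimate since the finite-volume entropies form a monotone sequence, so the subsequential limit is the limit). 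Your identity is stronger than what the paper extracts --- it gives the total entropy of any TI measure in closed form, and atomicity drops out immediately from finiteness of the integrand a.e. --- and it avoids both the equipartition theorem for nonatomic measures and the cylinder-approximation step. The second half of your argument, that every atom of a TI probability measure must be periodic because an infinite orbit of equal-mass atoms would have infinite total mass, is exactly the observation the paper makes in its first sentence. Both approaches are sound; yours is the more self-contained and quantitative of the two.
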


\begin{proof} Let $\mu$ be a TI measure on $X$.  Since any configuration
$\eta$ with $\mu(\eta)>0$ must be periodic, $\mu$ is atomic if and only if
it is a PC measure.  Suppose then that $\mu$ is not atomic, so that
$\mu=t\mu_a+(1-t)\mu_b$ with $0\le t<1$, $\mu_a$ an atomic probability
measure, and $\mu_b$ a nonatomic probability measure.  Because $\mu_b$ is
nonatomic there must exist, for any $n\ge1$, a partition
$X=\bigcup_{i=1}^n A_i$ with the $A_i$ measurable sets satisfying
$\mu_b(A_i)=1/n$ \cite[p.174]{H}.  Then one can approximate the $A_i$
arbitrarily closely by disjoint cylinder sets $A'_i$, where for some large
$N$, $A'_i=\pi_{[-N,N]}^{-1}(B_i)$ with $B_i\subset X_{[-N,N]}$.  In
particular, one can make this approximation sufficiently close that
 \begin{eqnarray}
  S(\pi_{[-N,N]}\mu_b)\nn
  &\ge&-\sum_{i=1}^n(\pi_{[-N,N]}\mu_b)(B_i)\log(\pi_{[-N,N]}\mu_b)(B_i)\nn\\
  &\ge&-\frac12\sum_{i=1}^n\mu_b(A_i)\log\mu_b(A_i)\\
  &=&\frac12\log n.\nn
 \end{eqnarray}
 Thus $\mu_b$ and hence also $\mu$ does not have finite (total) entropy.
 \end{proof}

Now let $\mu_k$ be a LTI measure on $X_k$; from \clem{shel} and the
discussion in \csect{structure} we know that every finite entropy
extension $\nu$ of $\mu_k$ has the form
 \be\label{muhat}
 \nu=\sum_\P c_\P\nu_\P, \qquad c_\P\ge0,\quad\sum_\P c_\P=1,
 \ee
where the sum is over some collection of minimal closed paths in
$G_k$.  Note that the BPC measure $\nu_\P$ has entropy
$S(\nu_\P)=\log|\P|$ and that the measure $\nu$ of \eqref{muhat} has
entropy
 \be\label{entnu}
S(\nu)=\sum c_\P(\log|\P|-\log c_\P).
 \ee

\begin{theorem}\label{minentth} Let $\mu_k$ be a LTI measure on $X_k$.
Then $\mu_k$ has at least one (TI) extension of minimal entropy of the form
 \be\label{muhatf}
\hat\mu=\sum_{i=0}^{2^{k}}c_i\nu_{\C_i}, \qquad 
    c_i\ge0,\quad\sum_{i=0}^{2^k}c_k=1,
 \ee
 for some cycles $\C_0,\ldots,\C_{2^k}$ in $G_k$.
 \end{theorem}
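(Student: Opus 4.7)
The plan is to show that the infimum of $S(\nu)$ over all finite-entropy TI extensions $\nu$ of $\mu_k$ is attained, and attained by a measure of the stated form.  By \clem{shel} every finite-entropy TI extension is a PC measure, and so has the form $\nu=\sum_\P c_\P \nu_\P$ with the sum running over a finite collection of minimal closed paths $\P$ in $G_k$ and entropy given by \eqref{entnu}.  My two-step strategy is first to reduce any such extension to one supported only on cycles without raising the entropy, and then to apply a Carath\'eodory-type argument on the compact polytope of cycle-supported extensions to get the bound $2^k+1$.

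For the first step I would argue that any non-cycle minimal closed path $\P$ in the support of $\nu$ can be eliminated.  Since $\P$ traverses more edges than distinct vertices, it must revisit some vertex, and this lets me split $\P$ into two shorter closed paths $\P_1,\P_2$ with $|\P_1|+|\P_2|=|\P|$.  Let $\P_i'$ denote the minimal closed path of which $\P_i$ is a power, and replace the summand $c_\P\nu_\P$ by $c_\P(|\P_1|/|\P|)\nu_{\P_1'}+c_\P(|\P_2|/|\P|)\nu_{\P_2'}$.  The marginal on $X_k$ is preserved, because $|\P_i|\pi_k\nu_{\P_i'}$ records the edge multiplicities of $\P_i$ on $G_k$ and those multiplicities sum to the edge multiplicities of $\P$.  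A direct calculation with \eqref{entnu} shows that the contribution of this summand to the total entropy is unchanged when both $\P_i$ are themselves minimal and strictly decreases otherwise; any merging of the resulting coefficients with pre-existing $c_{\P_i'}$ can only further decrease the entropy, by the concavity of $-x\log x$.  Iterating produces a cycle-supported extension $\nu'$ with $\pi_k\nu'=\mu_k$ and $S(\nu')\le S(\nu)$.

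For the second step, let $\Pi$ denote the set of coefficient vectors $(c_\C)_\C$ indexed by the finitely many cycles $\C$ in $G_k$ that satisfy $c_\C\ge 0$ and $\sum_\C c_\C\,\pi_k\nu_\C=\mu_k$.  By \cthm{measures} the set $\Pi$ is nonempty, and it is clearly a compact convex polytope.  The entropy $S((c_\C))=\sum_\C c_\C(\log|\C|-\log c_\C)$ is concave in $(c_\C)$ (a linear term plus the concave Shannon term), so it attains its minimum on $\Pi$ at an extreme point.  Because $M_k$ has affine dimension $2^k$ by \csect{structure}, the vectors $\{\pi_k\nu_\C\}$ span a linear subspace of dimension at most $2^k+1$, so any extreme point of $\Pi$ has at most $2^k+1$ nonzero entries.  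The corresponding $\hat\mu=\sum c_{\C_i}\nu_{\C_i}$ is the claimed minimum-entropy extension.

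The main obstacle I anticipate is the careful bookkeeping in the cycle-reduction step: I need to verify precisely the entropy identity when a non-minimal closed path is produced by a split and then folded onto the BPC measure of its minimal base (where the discrepancy $\log(|\P_i|/|\P_i'|)$ accounts for the entropy decrease), and I must confirm that the iteration terminates, for instance by noting that each reduction strictly decreases a total complexity such as the sum of the path lengths appearing in the support.
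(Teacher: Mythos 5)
Your proposal is correct in substance but reaches the result by a genuinely different route in both of its main steps, and it is worth recording how the two arguments differ. Where the paper first establishes the \emph{existence} of an entropy-minimizing extension abstractly (compactness of the set of TI extensions in the weak topology plus lower semicontinuity of $S$) and only then massages a minimizer into the desired form, you dispense with that step entirely: you push every finite-entropy extension down to a cycle-supported one and then minimize over the finite-dimensional coefficient polytope $\Pi$, where attainment is automatic. For the reduction to cycles, the paper avoids any iteration by an extremal choice (a minimizer whose support contains the maximal number of cycles) and a contradiction based on the strict entropy decrease when a split cycle merges with one already present; you instead iterate the splitting directly, which forces you to supply a termination argument but yields the stronger statement that \emph{every} finite-entropy extension dominates a cycle-supported one in entropy. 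For the bound $2^k+1$, the paper invokes Carath\'eodory's theorem and then argues that the entropy, being concave and minimized at an interior point of a segment of representations, is constant along it; your observation that a concave function on the compact polytope $\Pi$ attains its minimum at an extreme point, and that extreme points are basic feasible solutions of a system of rank at most $2^k+1$, is cleaner and packages both the attainment and the cardinality bound in one standard LP fact. The entropy bookkeeping you flag does work out: with $t=|\P_1|/|\P|$ the split leaves $c_\P(\log|\P|-\log c_\P)$ exactly invariant when both pieces are minimal, drops it by $c_i\log(|\P_i|/|\P_i'|)$ when a piece is a proper power, and merging coefficients only helps since $(a+b)\log(a+b)\ge a\log a+b\log b$.

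Two loose ends should be tightened. First, \clem{shel} gives only that a finite-entropy extension is a \emph{countable} convex combination of BPC measures, not a finite one; finiteness of the support is a conclusion of the reduction (there are finitely many cycles in $G_k$), not a hypothesis you may assume at the outset. Your term-by-term reduction is insensitive to this, so the fix is purely a matter of wording. Second, the termination measure you propose (the sum of the path lengths in the support) does not strictly decrease when a non-cycle path splits into two minimal non-cycle pieces, since then $|\P_1'|+|\P_2'|=|\P|$; you should instead use the multiset of lengths of the non-cycle paths in the support under the well-founded multiset ordering (each split replaces one such length by at most two strictly smaller ones), or a weighting such as $\sum_\P 3^{|\P|}$ over non-cycle paths.
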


 \begin{proof} We will first show that an entropy minimizing extension
exists, then that there is such a measure with the form \eqref{muhat} having
$c_\P>0$ only if $\P$ is a cycle, and finally that there is such a
minimizing measure which is a convex combination of at most $2^k+1$
measures $\nu_\C$, with $\C$ a cycle.  Let $M$ be the space of all TI
probability measures on $X$ which extend $\mu_k$. Then, in the weak
topology, $M$ is compact and moreover \eqref{edef} exhibits $S$ as the
increasing limit of functions continuous on $M$; thus $S$ is lower
semi-continuous on $M$ and so achieves its minimum there.  Each minimizing
measure will have the form \eqref{muhat}.

Now we show that there is a minimizing measure in the form \eqref{muhat}
for which each closed path appearing with nonzero coefficient is a cycle.
To do so we note that among all the minimizing measures we can choose one,
say
 \be\label{mutilde}
 \tilde\mu=\sum_\P \tilde c_\P\nu_\P, \qquad 
  \tilde c_\P\ge0,\quad\sum_\P \tilde c_\P=1,
 \ee
which contains, among the (minimal) closed paths $\P$ with $\tilde c_\P>0$,
the maximal number of cycles.  Then in fact all the $\P$ in \eqref{mutilde}
for which $\tilde c_\P>0$ must be cycles.  For if there is some such path
$\tilde\P$ which is not a cycle it may be decomposed into a cycle
$\tilde\C$ and some remaining path $\P'$ edge-disjoint from $\tilde\C$.  If
now in \eqref{mutilde} we make the replacement
 \be\label{replace}
 \tilde c_{\tilde\P}\nu_{\tilde\P}\quad \longrightarrow\quad
  \frac{|\tilde\C|}{|\tilde\P|}\tilde c_{\tilde\P}\nu_{\tilde\C}
 +\frac{|\P'|}{|\tilde\P|}\tilde c_{\tilde\P}\nu_{\P'},
 \ee
 then $\pi_k\tilde\mu$ is unchanged and a simple computation shows that
$S(\tilde\mu)$ does not increase.  By choice of $\tilde\mu$, then,
$\tilde\C$ must appear in \eqref{mutilde} with positive coefficient.  But
then the same computation shows that $S(\tilde\mu)$ must strictly decrease
under \eqref{replace}, contradicting its minimality.

To complete the proof we suppose that $N$ is the minimum number of cycles
needed to express an entropy minimizing extension of $\mu_k$ in the form
$\tilde\mu=\sum_{i=0}^Na_i\nu_{\C_i}$, $a_i>0$, assume that $N>2^k+1$, and
derive a contradiction.  Now certainly
$\mu_k=\sum_{i=0}^Na_i\pi_k\nu_{\C_i}$, and it then follows from the proof
in \cite{G} of the Carath\'eodory result mentioned in \crem{cycles}(b) that
(possibly after renumbering the cycles) we may also write
$\mu_k=\sum_{i=0}^{N-1}a^*_i\pi_k\nu_{\C_i}$ for some nonnegative
coefficients $a^*_i$.  Set $a_N^*=0$ and consider the measure
 \be
\tilde\mu(t)=\sum_{i=0}^Na_i(t)\nu_{\C_i}, \qquad
   a_i(t)=(1-t)a_i+ta^*_i,
 \ee
 and let $[u,1]$, with $u<0$, be the maximal interval such that
$a_i(t)\ge0$ for $i=0,\ldots,N$ if $t\in[u,1]$.  Note then that
$\pi_k\tilde\mu(t)=\mu_k$ for $t\in[u,1]$ and that the entropy
$S(\tilde\mu(t))$, which is concave in $t$, achieves its minimum at $t=0$,
and hence is constant. Thus $S(\tilde\mu(1))=S(\tilde\mu)$, contradicting
the minimality of $N$.  \end{proof}

 \begin{remark} (a) The argument of the second step in the proof of
\cthm{minentth} may be used to strengthen the conclusion expressed in
\eqref{muhat} by showing that in fact every entropy minimizing extension
of $\mu_k$ must be a {\it finite} convex combination of BPC measures.

 \smallskip\noindent
 (b) From \eqref{entnu} and \eqref{muhatf} it follows easily that the
entropy of a minimal entropy  extension of $\mu_k$ can be at most
$(2^k+1)[\log2^k+\log(2^k+1)]$.  \end{remark}

\cthm{minentth} reduces the search for a minimal entropy extension of a
given $\mu_k$ to a finite problem: for each set of $2^k+1$ cycles in
$G_k$ one tests whether $\mu_k$ is a convex combination of the measures
$\pi_k\nu_\C$ for cycles $\C$ in that set; if it is, the coefficients are
uniquely determined and one may compute the entropy of the corresponding
combination of the $\nu_\C$.  One then  chooses the minimum
entropy among the latter.  However, this is not practical for $k$ very
large, and we have no better approach at the moment.

There is, however, one case in which the minimal entropy extension of
$\mu_k$ may be obtained explicitly.  Suppose that $\mu_k$ is such that
there is a bijective measure preserving mapping $\phi:X_k\to X_k$ which for
each vertex $\xi$ of $G_k$, $\xi\in X_{k-1}$, maps the set of edges
entering $\xi$ to the set of edges leaving $\xi$, that is,
$\phi(\{0\xi,1\xi\})=\{\xi0,\xi1\}$.  Then $\phi$ determines a set of
pairwise edge-disjoint paths $\P_1,\ldots,\P_n$ by the condition that if
$\eta^{(1)},\ldots,\eta^{(|\P_i|)}$ are the edges of $\P_i$ in order then
$\eta^{(j+1)}=\phi(\eta^{(j)})$, $j=1,\ldots,|\P_i|-1$, and
$\phi(\eta^{(|\P_i|)})=\eta^{(1)}$.  Now because $\phi$ is measure
preserving, $\mu_k$ gives equal weight to the edges of each $\P_i$, and thus
the measure $\sum_{i=1}^n w_i|\P_i|\nu_{\P_i}$, where $w_i=\mu_k(\eta)$ for
any edge $\eta$ of $\P_i$, is an extension of $\mu_k$ with the same entropy
as $\mu_k$ and is thus the minimal entropy extension.  This is precisely
the case in which $\mu_k$ has a TI extension that is $(k+1)$-deterministic,
in the sense that is is supported on configurations that are determined by
any segment of length $k+1$.

A special case of the above occurs when $\mu_k$ is supported on a set of
pairwise vertex-disjoint cycles $\C_1,\ldots,\C_n$; this means that for
each vertex of $G_k$ at most one incoming and one outgoing edge is assigned
positive probability by $\mu_k$.  Then $\mu_k$ has the {\it unique} TI
extension $\sum_{i=1}^n w_i|\C_i|\nu_{\C_i}$, with $w_i$ as above.  In
particular, the maximal entropy extension of \csect{maxent} and the minimal
entropy extension of \cthm{minentth} are the same.  The total entropy of
the extension is the entropy of $\mu_k$, so that one sees from \eqref{ent}
(or directly) that this is also the entropy of the restriction $\mu_{k-1}$
of $\mu_k$ to $X_{k-1}$. The TI extension in this case is $k$-deterministic.

 \begin{remark} One might try to obtain a minimum entropy extension of
$\mu_k$ by a step-by-step procedure, extending first to a measure
$\mu_{k+1}$ on $X_{k+1}$, then to $\mu_{k+2}$ on $X_{k+2}$, etc.,
choosing at each step the minimum entropy extension for that step.  In
fact, there is a simple algorithm for each single step, but the procedure
may not produce the extension with the minimal total entropy, even for
the case $k=2$.  We omit details.
 \end{remark}

 \section{Nonextendible LTI measures\label{noextend}}

In this section we give examples to show that the local invariance
condition (LTI or, for the Bethe lattice, LI) is in general not sufficient
for extension to an invariant measure, unless $\Lambda$ is an interval in
$\bbz$, a strip in $\bbz^2$, a higher dimensional analogue of the latter,
or a full subgraph of the Bethe lattice.

\begin{example}\label{1d} (a) Let $\Lambda=\{0,1,3\}\subset\bbz$ and
$\Omega=\{\circ,\bullet\}$, and suppose that $\mu_\Lambda$ assigns
probability $1/2$ to each of the two configurations
$(\bullet,\bullet,\,\cdot\;,\circ)$ and $(\circ,\circ,\,\cdot\;,\bullet)$.
Then $\mu_\Lambda$ is LTI but does not even have a LTI extension to the
interval $\{0,1,2,3\}$, and hence not a TI extension to $\bbz$.

 \par\smallskip\noindent
 (b) We may give a similar example for the Bethe lattice, again with
$\Omega=\{\circ,\bullet\}$. In the notation of \csect{trees} we take $q=2$
and let $H$ be one of the connected subgraphs of $T_2$ with three vertices.
Then the measure which assigns probability $1/4$ to each of the
configurations \mini\circ\bullet\bullet, \mini\circ\circ\bullet,
\mini\bullet\bullet\circ, and \mini\bullet\circ\circ\ is LI but does not
extend to an LI measure on $X_2$.  \end{example}

Note that when $\Lambda\subset\bbz$ and $\Lambda_k$ is the minimal
interval containing $\Lambda$, the extendibility of an LTI measure from
$X_\Lambda$ to $X_k$, which is necessary and, by \cthm{jll}, sufficient
for extendibility to $X$, may be determined by solving a standard linear
programming feasibility problem.  If $H$ is a finite subgraph of $\bbt$
then a similar remark applies to the extension of an LI measure from
$X_H$ to $X_T$, where $T$ is the minimal full subgraph of $\bbt$
containing $H$.  Thus the problem of extension in one dimension, or on
the Bethe lattice, is decidable; this is in contrast to the situation in
higher dimension, as discussed in \csect{decide}.

\begin{example}\label{22} Let $\Lambda$ be a $2\times2$ square in
$\bbz^2$, let $\Omega=\{\circ,\bullet\}$, and let $\mu$ be the measure
which assigns probability $1/4$ to each of the configurations
$\eta^{(i)}$, $i=1,\ldots,4$, shown in \cfig{2by2}. $\mu$ is LTI, since
the marginal on the top, bottom, left, or right two sites of $\Lambda$ is
Bernoulli with parameter $1/2$, but $\mu$ cannot be extended to a measure
$\mu'$ on configurations on a $3\times3$ square $\Lambda'$ containing
$\Lambda$.  For suppose that $\mu'$ is such an extension; $\mu'$ can give
positive probability to a configuration $\eta'$ only if the restriction
of $\eta'$ to each of the four $2\times2$ subsquares of $\Lambda'$ is one
of the $\eta^{(i)}$.  Moreover, there must be an $\eta'$ with
$\mu'(\eta')>0$ having restriction $\eta^{(1)}$ to the lower left
subsquare, and $\eta'$ must then have restrictions $\eta^{(3)}$ and
$\eta^{(4)}$ to the upper left and lower right subsquares, respectively.
But then the restriction of $\eta'$ to upper right subsquare will not be
one of the $\eta^{(i)}$, whether or not the upper right hand site of
$\Lambda'$ is occupied in $\eta'$.  \end{example}

\begin{figure}[ht]
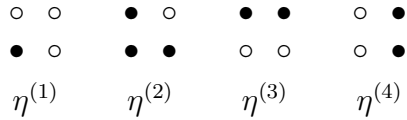

\centerline{\vbox{\beginpicture 
\setcoordinatesystem units <0.2truein,0.2truein> 
\setplotarea x from -0.5 to 10.5, y from -2 to 1.2
 \setlinear
 \put {$\bullet$} at 0 0
 \put {$\circ$} at 1 0
 \put {$\circ$} at 0 1
 \put {$\circ$} at 1 1
 \put {$\bullet$} at 3 0
 \put {$\bullet$} at 3 1
 \put {$\bullet$} at 4 0
 \put {$\circ$} at 4 1
 \put {$\circ$} at 6 0
 \put {$\circ$} at 7 0
 \put {$\bullet$} at 6 1
 \put {$\bullet$} at 7 1
 \put {$\circ$} at 9 0
 \put {$\circ$} at 9 1
 \put {$\bullet$} at 10 0
 \put {$\bullet$} at 10 1
 \put {$\eta^{(1)}$} [t] at 0.5 -0.8
 \put {$\eta^{(2)}$} [t] at 3.5 -0.8
 \put {$\eta^{(3)}$} [t] at 6.5 -0.8
 \put {$\eta^{(4)}$} [t] at 9.5 -0.8
\endpicture}}
\caption{\label{2by2}Configurations given nonzero probability in \cex{22}.  
Filled circles represent  occupied sites.}
\end{figure}

The measure $\mu$ in \cex{22} is not invariant under any of the symmetry
operations of the square; one could hope that the imposition of some symmetry
might give extensibility.  The next example shows that this is not the
case.

\begin{example}\label{33}Let $\Lambda$ be a $3\times3$ square in $\bbz^2$
and let $\mu$ be the measure which assigns probability $1/7$ to each of
the configurations $\eta^{(i)}$, $i=1,\ldots,7$, shown in \cfig{3by3}.
$\mu$ is easily seen to be LTI and is clearly invariant under any
symmetry of the square.  However, $\mu$ cannot be extended to any
$4\times4$ square $\Lambda'$ containing $\Lambda$.  The proof is similar
to the proof of nonextendibility given in \cex{22}: one checks, for
example, that there is no configuration $\eta'$ in $\Lambda'$ which
reduces to $\eta^{(2)}$ in the lower left hand $3\times3$ subsquare and
to one of the $\eta^{(i)}$ in all the other $3\times3$ subsquares.
\end{example}

\begin{figure}[ht]
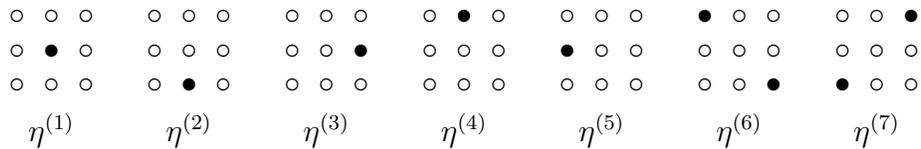

\centerline{\vbox{\beginpicture 
\setcoordinatesystem units <0.18truein,0.18truein> 
\setplotarea x from -0.5 to 10.5, y from -2 to 1.2
 \setlinear
 \put {$\circ$} at 0 0
 \put {$\circ$} at 0 1
 \put {$\circ$} at 0 2
 \put {$\circ$} at 1 0
 \put {$\bullet$} at 1 1
 \put {$\circ$} at 1 2
 \put {$\circ$} at 2 0
 \put {$\circ$} at 2 1
 \put {$\circ$} at 2 2
 \put {$\circ$} at 4 0
 \put {$\circ$} at 4 1
 \put {$\circ$} at 4 2
 \put {$\bullet$} at 5 0
 \put {$\circ$} at 5 1
 \put {$\circ$} at 5 2
 \put {$\circ$} at 6 0
 \put {$\circ$} at 6 1
 \put {$\circ$} at 6 2
 \put {$\circ$} at 8 0
 \put {$\circ$} at 8 1
 \put {$\circ$} at 8 2
 \put {$\circ$} at 9 0
 \put {$\circ$} at 9 1
 \put {$\circ$} at 9 2
 \put {$\circ$} at 10 0
 \put {$\bullet$} at 10 1
 \put {$\circ$} at 10 2
 \put {$\circ$} at 12 0
 \put {$\circ$} at 12 1
 \put {$\circ$} at 12 2
 \put {$\circ$} at 13 0
 \put {$\circ$} at 13 1
 \put {$\bullet$} at 13 2
 \put {$\circ$} at 14 0
 \put {$\circ$} at 14 1
 \put {$\circ$} at 14 2
 \put {$\circ$} at 16 0
 \put {$\bullet$} at 16 1
 \put {$\circ$} at 16 2
 \put {$\circ$} at 17 0
 \put {$\circ$} at 17 1
 \put {$\circ$} at 17 2
 \put {$\circ$} at 18 0
 \put {$\circ$} at 18 1
 \put {$\circ$} at 18 2
 \put {$\circ$} at 20 0
 \put {$\circ$} at 20 1
 \put {$\bullet$} at 20 2
 \put {$\circ$} at 21 0
 \put {$\circ$} at 21 1
 \put {$\circ$} at 21 2
 \put {$\bullet$} at 22 0
 \put {$\circ$} at 22 1
 \put {$\circ$} at 22 2
 \put {$\bullet$} at 24 0
 \put {$\circ$} at 24 1
 \put {$\circ$} at 24 2
 \put {$\circ$} at 25 0
 \put {$\circ$} at 25 1
 \put {$\circ$} at 25 2
 \put {$\circ$} at 26 0
 \put {$\circ$} at 26 1
 \put {$\bullet$} at 26 2
 \put {$\eta^{(1)}$} [t] at 1 -0.8
 \put {$\eta^{(2)}$} [t] at 5 -0.8
 \put {$\eta^{(3)}$} [t] at 9 -0.8
 \put {$\eta^{(4)}$} [t] at 13 -0.8
 \put {$\eta^{(5)}$} [t] at 17 -0.8
 \put {$\eta^{(6)}$} [t] at 21 -0.8
 \put {$\eta^{(7)}$} [t] at 25 -0.8
\endpicture}}
\caption{\label{3by3}Configurations given nonzero probability in \cex{33}.}
\end{figure}

The measures in the above examples appear rather special, in that they
assign zero probability to most configurations.  Recall from \csect{intro},
however, that the set of nonextendible measures (for fixed $\Lambda$) is
open; thus small perturbations of these measures will also be nonextendible.
Such perturbations may be taken to assign nonzero probability to every
configuration in $\Lambda$, to be be again LTI, and, for \cex{33}, to
have all the symmetries of the square.

\section{Undecidability\label{decide}}

As discussed above, the situation in $\bbz^d$ for $d\ge2$ is quite different
from that for $d=1$ or for the Bethe lattice.  In particular, there is no
formula similar to \eqref{MC} or \eqref{ext2} for an extension, and the LTI
condition is not sufficient for extendibility.  More surprisingly, the
extendibility problem for $\bbz^d$ with $d\ge2$ is in several senses
undecidable.  To discuss this we introduce the idea of {\it tiling}.  It
suffices to fix $\Lambda$ to be a $2\times2$ box in $\bbz^2$, and we will
do so throughout this section.  Let $Y$ be a finite subset of
$X_\Lambda=\Omega^\Lambda$ constructed with some $\Omega$; we call the
elements of $Y$ {\it tiles} and say that $Y$ {\it tiles the plane\/} if
there is a configuration in $X$ whose restriction to every $2\times2$ box
is an element of $Y$.  Here we do not consider $\Omega$ as fixed, so there
is an infinite number of possible sets $Y$ of tiles; given $Y$ we will
consider only tilings by $Y$, so that for all practical purposes $Y$
determines $\Omega$, and we will omit mention of the latter in what
follows.  Berger \cite{Berger} proved:

 \par\medskip
\begin{theorem}\label{berger}The tiling problem is undecidable: there is
no decision procedure which, given a set of tiles $Y$, determines whether
or not $Y$ tiles the plane.  \end{theorem}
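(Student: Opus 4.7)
The plan is to prove undecidability by reducing the (complement of the) halting problem for Turing machines to the tiling problem. Given any Turing machine $M$, I would construct, in an effective way, a set of tiles $Y_M$ such that $Y_M$ tiles the plane if and only if $M$ fails to halt on the empty input. Since the halting problem is undecidable, the tiling problem must be too. It is convenient to work in the standard formulation with Wang-style tiles (unit squares with colored edges, allowed to be placed only edge-to-edge with matching colors) and then transcribe the result back into the $2\times 2$-window formulation of the theorem: the set of allowed configurations on $\{0,1\}^2$ encodes precisely the adjacency constraints of the Wang tile set, using $\Omega$ large enough to label tile type together with edge orientation.

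The first and crucial step is to build an \emph{aperiodic} tile set $Y_0$: one that tiles the plane but admits no periodic tiling. Without such a set, the tiling problem would be decidable via the Hanf--Dawson observation that a tile set either admits a periodic tiling (discoverable by enumeration) or fails to tile some finite region (also discoverable by enumeration). I would use Robinson's construction, which is a simplification of Berger's original: take $32$ tiles whose matching rules force nested square patterns of side $2^n$ at every scale $n$, producing a hierarchical, self-similar structure in every valid tiling. The aperiodicity is proved by showing that any tiling contains supersquares of arbitrarily large side and that any translation must then respect these nested scales, which is impossible.

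The second step is to superimpose a simulation layer on top of this aperiodic skeleton. Inside each supersquare of side $2^n$ I would encode a computation of $M$ for $2^n$ steps on a tape of length $2^n$, using additional colors on the edges to propagate tape symbols horizontally and the Turing machine's head and state information vertically (time flows up), with local matching rules forcing faithful execution of the transition function. A ``halting'' state has no consistent tile; a ``non-halting'' computation by $M$ lifts to a valid filling of every supersquare. The self-similar hierarchy is used to patch these simulations consistently at all scales. Hence $Y_M=Y_0\cup Y_M^{\text{sim}}$ tiles the plane iff $M$ runs forever.

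The main obstacle is the construction and verification of the aperiodic set $Y_0$ together with its rigid hierarchical structure; this is where essentially all of the combinatorial ingenuity resides. Once that is in place, the Turing-machine simulation is an exercise in local rule design, and the reduction from non-halting to tileability is straightforward. Finally, because every standard Wang tile set can be coded as a set of forbidden $2\times 2$ patterns over an enlarged alphabet, the stated undecidability for $\Lambda$ a $2\times 2$ box follows directly.
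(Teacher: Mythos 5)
You should first be aware that the paper does not prove this statement at all: it is Berger's theorem, quoted with a citation to Berger's 1966 memoir and to Robinson's 1971 simplification, so there is no in-paper argument to compare yours against. Your sketch is a faithful outline of exactly that cited proof: reduce the complement of the halting problem to tiling, which forces you first to build an aperiodic tile set (since, as you correctly note, if every tileable set tiled periodically the problem would be decidable by dovetailing the search for a periodic tiling with the search for an untileable finite region), and then to superimpose a Turing-machine simulation on the hierarchical structure. The final remark, that Wang tiles can be recoded as allowed $2\times2$ windows over an enlarged alphabet, is precisely the observation the paper attributes to Schlijper.

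That said, what you have written is a roadmap, not a proof, and you say so yourself. The two places where essentially all the mathematics lives are left unexecuted. First, the construction and verification of the aperiodic set: you assert that Robinson's tiles ``force nested square patterns of side $2^n$ at every scale,'' but proving that \emph{every} tiling by that set organizes itself into this hierarchy (not merely that one such tiling exists) is the long and delicate part of Robinson's paper, and your one-sentence aperiodicity argument (``any translation must respect these nested scales'') presupposes it. (Minor factual slips: Robinson's set is usually counted as $56$ tiles, or $6$ up to rotation and reflection, and the forced squares have side $2^{n+1}-1$.) Second, the phrase ``the self-similar hierarchy is used to patch these simulations consistently at all scales'' conceals a genuine difficulty: computations attached to supersquares of different orders occupy overlapping regions of the plane, and Robinson must carve out, within each supersquare, rows and columns that are \emph{free} of the structure belonging to all smaller scales before a computation can be laid down there. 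Without that interleaving argument the claim that a non-halting $M$ yields a consistent global tiling does not follow. So: right strategy, same as the source the paper cites, but the proof itself is deferred exactly where the paper defers it.
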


 The statement of the theorem means that there is no algorithm (or Turing
machine) to which we can submit a tile set $Y$ and which will then yield
the result ``tiles'' when $Y$ in fact tiles and ``does not tile" when it
does not. Note that it is crucial here that the set of possible tile sets
$Y$ be infinite, since otherwise there would exist such a decision
procedure, though we might not know what it is: if there were $N$ tile sets
$Y_1,\ldots,Y_N$ then it would be one of the $2^N$ procedures which answer
the $N$ questions ``Does $Y_k$ tile?'' in all $2^N$ possible ways.

We should remark that Berger worked with a different sort of tile,
originally introduced by Wang \cite{Wang}, but it is easy to see
\cite{Schlijper4} that his result implies the above theorem.  A simplified
version of the proof of Berger's theorem is due to R. M. Robinson
\cite{Robinson}.

There is an alternative formulation of undecidability which follows from
Berger's result.

\begin{corollary}\label{SG} There exists a set of tiles $Y_*$ such that
$Y_*$ tiles the plane but cannot be proved to do so.  \end{corollary}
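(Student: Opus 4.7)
The plan is to deduce the corollary from \cthm{berger} via a standard recursion-theoretic dichotomy. First I would observe that the predicate ``$Y$ does not tile the plane'' is recursively enumerable. By a compactness (König's lemma) argument, $Y$ fails to tile $\bbz^2$ if and only if there is a finite box $B\subset\bbz^2$ admitting no tiling at all by the elements of $Y$; since for fixed $B$ there are only finitely many candidate tilings to examine, one can enumerate boxes $B$ of increasing size and test each by exhaustive search, halting as soon as some $B$ has no valid tiling. Thus nontileability is semi-decidable.

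Next, fix once and for all a recursively axiomatizable sound formal system $\mathsf{T}$ strong enough to express and prove routine statements about finite objects (for concreteness, ZFC). The set of $\mathsf{T}$-theorems is recursively enumerable, via the standard enumeration of finite proofs.

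Now I would argue by contradiction: suppose that \emph{every} tile set $Y$ which in fact tiles the plane admits a $\mathsf{T}$-proof of the statement ``$Y$ tiles $\bbz^2$.'' Then the predicate ``$Y$ tiles'' is also recursively enumerable, obtained by enumerating all $\mathsf{T}$-theorems and selecting those of the desired form. Running this enumerator in parallel with the one from the first paragraph produces an algorithm that on any input $Y$ halts after finitely many steps with the correct verdict --- precisely a decision procedure for the tiling problem, contradicting \cthm{berger}. Hence there must exist a tile set $Y_*$ that tiles $\bbz^2$ but for which $\mathsf{T}$ produces no proof of that fact, which is the desired conclusion.

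The main subtlety, rather than the main obstacle, is conceptual: the phrase ``cannot be proved to do so'' is meaningful only relative to a fixed recursively axiomatized sound theory, and the argument establishes only the existence of $Y_*$, not an effective construction. A secondary point worth stating carefully is the compactness reduction in the first paragraph, which rests on the fact that the space of tilings of $\bbz^2$ by $Y$ is the inverse limit of the finite sets of tilings of boxes, so emptiness of some finite-box tiling set is the only obstruction to global tileability.
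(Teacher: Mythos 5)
Your proposal is correct and follows essentially the same route as the paper: both arguments combine the compactness fact that non-tileability is witnessed by a finite box (hence semi-decidable and provable when true) with the assumption that every true tileability statement is provable, to manufacture a decision procedure contradicting \cthm{berger}. Your version merely spells out the recursion-theoretic dovetailing and the dependence on a fixed sound recursively axiomatized theory more explicitly than the paper does.
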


\begin{proof} It is known \cite{Robinson} that if $Y$ does not tile the
plane then there is some finite rectangle which it does not tile, so that
the fact that $Y$ does not tile is provable.  Thus if the conclusion of the
theorem were false an algorithm generating all provable theorems would
provide a decision procedure for the tiling problem.  \end{proof}

It should be noted that \ccor{SG} is not really stated precisely; whether
or not $Y$ can be proved to tile depends on the axiom system on which the
proof is to be based.  In fact, however, a tile set $Y_*$ satisfying the
theorem will always exist, but will in general depend on the (consistent
and sound) axiom system chosen.

 Schlijper \cite{Schlijper4} observed a connection between the tiling
 problem and the extendibility of measures:

 \begin{theorem} [Schlijper]\label{Schthm} A set of tiles $Y$ tiles the
plane if and only if there exists an extendible measure $\nu_Y$ on
$X_\Lambda$ with support $Y$.  \end{theorem}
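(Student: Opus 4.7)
The plan is a two-direction compactness argument, with an empirical-measure construction for the forward implication and a countable-intersection argument for the reverse. Throughout I would observe that since $Y$ is finite, I may take $\Omega$ to be the finite set of symbols used in $Y$, so that $X=\Omega^{\bbz^2}$ is compact in the product topology.

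For the forward implication, I would begin with a tiling configuration $\eta\in X$ whose every $2\times2$ restriction lies in $Y$ (after the obvious translation identification), form the empirical averages $\mu_N=|B_N|^{-1}\sum_{v\in B_N}\delta_{T_v\eta}$ over boxes $B_N=\{-N,\ldots,N\}^2$, and extract a weakly convergent subsequence with limit $\mu$, using compactness of the space of probability measures on $X$. A routine F\o lner-type calculation, based on $|B_N\triangle(B_N-w)|/|B_N|\to 0$ for each fixed $w\in\bbz^2$, would then show that $\mu$ is TI. Because every $2\times2$ restriction of every $T_v\eta$ lies in $Y$, the marginal $\pi_\Lambda\mu_N$ is supported on $Y$, and by weak convergence so is $\pi_\Lambda\mu$; the extendible measure $\nu_Y=\pi_\Lambda\mu$ then satisfies the requirement.

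For the reverse implication, I would take any TI extension $\mu$ of $\nu_Y$ on $X$. Translation invariance forces the marginal of $\mu$ on each $2\times2$ box $v+\Lambda$ to be the $v$-translate of $\nu_Y$ and hence to be supported on the $v$-translate of $Y$; thus for each fixed $v\in\bbz^2$ the event $\{\eta\in X \mid \eta|_{v+\Lambda}\in Y\}$ has $\mu$-measure one. Since $\bbz^2$ is countable, the intersection of these events over all $v$ is again of full $\mu$-measure and in particular nonempty, and any configuration in that intersection provides the required tiling.

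The main obstacle is essentially nonexistent, since the argument follows a standard compactness-plus-averaging template. The only delicate points will be ensuring that the empirical limit is translation invariant via the F\o lner estimate, and reading ``support $Y$'' as ``support contained in $Y$,'' since any tile of $Y$ that does not appear in any tiling of the plane cannot lie in the support of any extendible measure; with this reading the equivalence is both clean and tight.
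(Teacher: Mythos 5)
Your argument is correct and complete, and there is in fact nothing in the paper to compare it against: \cthm{Schthm} is stated as Schlijper's observation and cited from \cite{Schlijper4} without proof, so you have supplied the missing argument, and it is the standard one. The forward direction via empirical averages of $\delta_{T_v\eta}$ over boxes, weak-$*$ compactness of the probability measures on the compact space $\Omega^{\bbz^2}$, and the F\o lner estimate $|B_N\triangle(B_N-w)|/|B_N|\to0$ is sound; since $X_\Lambda$ is finite, the support condition passes to the limit trivially ($\mu_N(Y^c)=0$ for all $N$ forces $\mu(Y^c)=0$ for the weak limit). The reverse direction, intersecting the countably many $\mu$-full cylinder events $\{\eta\mid\eta|_{v+\Lambda}\in Y\}$ and picking any configuration in the (nonempty) intersection, is likewise correct. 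Your closing caveat is also well taken: with ``support equal to $Y$'' the forward implication is literally false --- adjoin to a tileable set a tile that can occur in no tiling, e.g.\ for $\Omega=\{\circ,\bullet\}$ take the all-$\circ$ tile together with the tile having a single $\bullet$ in one corner --- so ``support contained in $Y$'' is the reading under which the theorem is true, and it is also the reading needed for \ccor{notile} and \ccor{lastcor} to follow from \cthm{berger} and \ccor{SG}.
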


Then an undecidability result follows:

\begin{corollary}\label{notile} There is no decision procedure
which, given a set of tiles $Y$, determines whether or not there exists
an extendible measure $\nu_\Lambda$ on $X_\Lambda$ with support $Y$.
Equivalently, there is no decision procedure which, given an LTI
measure $\mu_\Lambda$ on $X_\Lambda$, determines whether or not there
exists an extendible measure $\nu_\Lambda$ on $X_\Lambda$ with the same
support.  \end{corollary}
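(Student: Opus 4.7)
The plan is to derive the corollary as an immediate consequence of Berger's theorem (\cthm{berger}) and Schlijper's equivalence (\cthm{Schthm}) via a standard reduction argument. First I would prove the undecidability of the tile-set version. Suppose, toward a contradiction, that there were a decision procedure $\mathcal{D}$ which, given a tile set $Y$, outputs whether there exists an extendible measure $\nu_\Lambda$ on $X_\Lambda$ with support $Y$. By \cthm{Schthm}, the output of $\mathcal{D}$ on input $Y$ is ``yes'' if and only if $Y$ tiles the plane, so $\mathcal{D}$ would itself be a decision procedure for the tiling problem, contradicting \cthm{berger}. This establishes the first sentence of the corollary.

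Next I would prove the equivalence of the two formulations by showing that a decision procedure for either one yields a decision procedure for the other. The direction from the first formulation to the second is immediate: given an LTI measure $\mu_\Lambda$ on $X_\Lambda$, read off its (finite) support $Y\subset X_\Lambda$ by a routine computation and apply the first decision procedure to $Y$. For the reverse direction, given a tile set $Y\subset X_\Lambda$, first test whether any LTI measure with support exactly $Y$ exists; this is a finite linear programming feasibility problem (the LTI conditions \eqref{pti} plus strict positivity on $Y$ and vanishing off $Y$), hence decidable. If no such LTI measure exists, then \emph{a fortiori} no extendible measure has support $Y$, so the answer is ``no''. If one does exist, compute such a measure $\mu_\Lambda$ and apply the second decision procedure to it; since the property ``$\exists$ extendible $\nu_\Lambda$ with $\mathrm{supp}(\nu_\Lambda)=Y$'' depends only on $Y$, the answer obtained is correct.

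There is no serious technical obstacle; the substantive content is entirely carried by \cthm{berger} and \cthm{Schthm}, both of which are cited. The only point requiring mild care is the reverse direction of the equivalence, where one must argue that testing and producing an LTI measure with prescribed support is effective — but this is just LP feasibility on the polytope $M_k$-analogue for $\Lambda$, so it is decidable in finite time.
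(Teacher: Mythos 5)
Your proposal is correct and follows essentially the same route as the paper, which states the corollary as an immediate consequence of \cthm{berger} and \cthm{Schthm} without spelling out the details. Your additional care in justifying the ``equivalently'' clause --- reducing the tile-set formulation to the LTI-measure formulation via an effective LP feasibility test for the existence of an LTI measure with prescribed support --- is a sound filling-in of a step the paper leaves implicit.
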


This result certainly suggests that there should be no decision procedure
for our original problem of determining whether or not a given LTI
measure $\mu_\Lambda$ is extendible, that is, that this problem is
undecidable.  However, since most of the measures $\mu_\Lambda$ are
presumably not expressible---not computable---most cannot in fact be
``given'' to the decision procedure to begin with.

Nevertheless, \ccor{notile} does lead to a conclusion which furnishes
a strong sense of undecidability for the original problem.  Suppose there
were a condition $C$ for extendibility which was ``useful'', that is,
simpler and more transparent than extendibility itself---perhaps something
in the nature of the LTI condition (although of course we know that this is
not a such a condition for $d\ge2$).  No matter how simple $C$ were,
however, there could exist no decision procedure which, given a set $Y$
of tiles, would determine whether or not there exists a measure
$\nu_\Lambda$ on $X_\Lambda$ with support $Y$ which satisfies $C$.

Another formulation of undecidability for the extension problem,
alternative to \ccor{notile} in the sense that \ccor{SG} is alternative to
\cthm{berger}, follows immediately from \ccor{SG} and \cthm{Schthm}:

\begin{corollary}\label{lastcor} There exists a set of tiles $Y_*$ such
that (1) there is an extendible measure $\nu_\Lambda$ on $X_\Lambda$ with
support $Y_*$, but (2) this fact cannot be proven.  \end{corollary}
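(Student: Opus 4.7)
The plan is to derive this directly from \ccor{SG} and \cthm{Schthm}, exactly as the authors advertise. The key observation is that Schlijper's theorem is a genuine (provable) biconditional: for any tile set $Y$, the statement ``$Y$ tiles the plane'' is logically equivalent to ``there exists an extendible measure on $X_\Lambda$ with support $Y$.'' Since this equivalence itself is a theorem provable in whatever background axiom system we are working in, provability can be transported across it.

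Concretely, I would proceed as follows. First, invoke \ccor{SG} to obtain a tile set $Y_*$ which tiles the plane but such that the statement ``$Y_*$ tiles the plane'' is not provable in the chosen (consistent, sound) axiom system. Second, apply \cthm{Schthm} to the fact that $Y_*$ tiles: this yields an extendible measure $\nu_\Lambda$ on $X_\Lambda$ with support $Y_*$, establishing part~(1) of the conclusion.

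Third, for part~(2), argue by contradiction. Suppose there were a proof, in the chosen axiom system, of the statement $(*)$ that there exists an extendible measure on $X_\Lambda$ with support $Y_*$. Since \cthm{Schthm} is itself a theorem of the system, one could append to this proof the implication ``$(*)$ $\Rightarrow$ $Y_*$ tiles the plane'' furnished by Schlijper's result, yielding a proof that $Y_*$ tiles. This contradicts the defining property of $Y_*$ from \ccor{SG}. Hence no such proof of $(*)$ exists, which is assertion~(2).

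I expect no serious obstacle: the whole content of the corollary lies in the two cited results, and the argument is essentially a one-line syllogism. The only point requiring a little care is the same caveat the authors attach to \ccor{SG}, namely that ``cannot be proven'' must be interpreted relative to a fixed consistent and sound axiom system; the tile set $Y_*$, and hence the measure $\nu_\Lambda$ produced by \cthm{Schthm}, depends on that choice. It is worth stating this dependence explicitly in the proof to match the analogous remark made after \ccor{SG}.
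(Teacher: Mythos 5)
Your proposal is correct and is exactly the argument the paper intends when it says the corollary ``follows immediately from \ccor{SG} and \cthm{Schthm}'': obtain $Y_*$ from \ccor{SG}, get the measure from one direction of \cthm{Schthm}, and transport unprovability across the (provable) biconditional in the other direction. The paper gives no further proof, so your spelled-out version, including the caveat about dependence on the axiom system, matches it in substance.
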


Here, as for \ccor{SG}, $Y_*$ will in general depend on the axiom system
to be used in the proof.  It of course follows from \ccor{lastcor} that
if we specify a measure on $X_\Lambda$ with support $Y_*$ to arbitrary
but not perfect precision---that is, specify that it lie in some
arbitrarily small set of measures---then for some such specification
there exists an extendible measure satisfying it whose existence can't be
proven.  This suggests that there exists an extendible measure
$\nu_\Lambda$ on $X_\Lambda$ for which the proposition that it is
extendible can't be proven.  However this is not so clear, since the
relevant measure may not be suitably expressible and thus the desired
proposition may not exist.

As with \ccor{notile}, however, \ccor{lastcor} does have a consequence
which strongly suggests undecidability for the original problem: if there
were a condition $C$ for extendibility, ``useful'' in the sense discussed
above, then the statement that there exists a measure with support $Y_*$
which satisfies $C$ would be true but unprovable.

 \bigskip\noindent
 {\bf Acknowledgments.} The work J.L.L. was supported in part by NSF Grant
DMR 1104500 and AFOSR Grant FA9550-16-1-0037.  We thank A. C. D. van Enter
and M. Hochman for bringing to our attention previous work on this problem,
M. Hochman, M. Saks, S. Thomas, and A. C. D. van Enter for helpful
discussions, and D. Avis for making the computer program {\it lrs}
available to the public and for helpful advice on its use.

\end{document}